\newtheorem{theorem}{Theorem}
\newtheorem{lemma}{Lemma}[section]
\newtheorem{definition}{Definition}[theorem]
\newtheorem{observation}{Observation}[theorem]
\newtheorem{corollary}[theorem]{Corollary}
\newtheorem*{rep@theorem}{\rep@title}
\newcommand{\newreptheorem}[2]{%
\newenvironment{rep#1}[1]{%
 \def\rep@title{#2 \ref{##1}}%
 \begin{rep@theorem}[restated]}%
 {\end{rep@theorem}}}
\newcommand{\poly}{\ensuremath{\mathrm{poly}}\xspace}
\newcommand{\polylog}{\ensuremath{\mathrm{polylog}}\xspace}
\newcommand{\algorithmiccommentt}[1]{\colorbox{black!10}{#1}}
\newcommand{\LineIf}[2]{ \STATE \algorithmicif\ {#1}\ \algorithmicthen\ {#2} }
\newcommand{\LineIfElse}[3]{ \STATE \algorithmicif\ {#1}\ \algorithmicthen\ {#2} \algorithmicelse\ {#3} }
\newcommand\E{\mathbb{E}}
\begin{document}
\author{Nikhil Bansal\thanks{Eindhoven University of Technology, Netherlands.
\texttt{n.bansal@tue.nl}.
Supported by a NWO Vidi grant 639.022.211 and an ERC consolidator grant 617951.
 } \and Shashwat Garg\thanks{Eindhoven University of Technology, Netherlands.  
\texttt{s.garg@tue.nl}.
Supported by the Netherlands Organisation for Scientific Research (NWO) under project no.~022.005.025.} 
 \and Jesper Nederlof\thanks{Eindhoven University of Technology, The Netherlands.\texttt{j.nederlof@tue.nl}. Supported by NWO Veni grant 639.021.438. }
 \and Nikhil Vyas\thanks{Indian Institue of Technology Bombay, India. \texttt{vyasnikhil96@gmail.com}}
 }
\title{
Faster Space-Efficient Algorithms for Subset Sum, k-Sum and Related Problems
}

\maketitle


\begin{abstract}
We present 
randomized algorithms that solve Subset Sum and Knapsack instances with $n$ items in $O^*(2^{0.86n})$ time, where the $O^*(\cdot)$ notation suppresses factors polynomial in the input size, and polynomial space, assuming random read-only access to exponentially many random bits. 
These results can be extended to solve Binary Linear Programming on $n$ variables with few constraints in a similar running time.
We also show that for any constant $k\geq 2$, random instances of $k$-Sum can be solved using $O(n^{k-0.5}\polylog(n))$ time and $O(\log n)$ space, without the assumption of random access to random bits.

Underlying these results is an algorithm that determines whether two given lists of length $n$ with integers bounded by a polynomial in $n$ share a common value. Assuming random read-only access to random bits, we show that this problem can be solved using $O(\log n)$ space significantly faster than the trivial $O(n^2)$ time algorithm if no value occurs too often in the same list.

\end{abstract}

\clearpage
\section{Introduction}

The Subset Sum problem and the closely related Knapsack problem are two of the most basic NP-Complete problems. 
In the Subset Sum problem we are given integers $w_1,\ldots,w_n$ and an integer target $t$ and are asked to find a subset $X$ of indices such that $\sum_{i \in X} w_i=t$. In the Knapsack problem we are given integers $w_1,\ldots,w_n$ (weights), integers $v_1,\ldots,v_n$ (values) and an integer $t$ (weight budget) and want to find a subset $X$ maximizing $\sum_{i \in X}v_i$ subject to the constraint $\sum_{i \in X}w_i \leq t$.

It is well known that both these problems can be solved in time $O^*(\min(t,2^n))$, where we use the $O^*(\cdot)$ notation to suppress factors polynomial in the input size. In their 
landmark paper introducing the Meet-in-the-Middle approach,  Horowitz and Sahni~\cite{DBLP:journals/jacm/HorowitzS74} solve both problems in $O^*(2^{n/2})$ time and $O^*(2^{n/2})$ space, substantially speeding up the trivial $O^*(2^{n})$ time polynomial space algorithm. Later this was improved to $O^*(2^{n/2})$ time and $O^*(2^{n/4})$ space by Schroeppel and Shamir~\cite{DBLP:journals/siamcomp/SchroeppelS81}. While these approaches have been extended to obtain improved tradeoffs in several special cases, particularly if the instances are random or ``random-like"~\cite{DBLP:conf/icalp/AustrinKKM13,DBLP:conf/stacs/AustrinKKN15,DBLP:conf/stacs/AustrinKKN16,DBLP:conf/crypto/DinurDKS12,DBLP:conf/eurocrypt/Howgrave-GrahamJ10}, the above algorithms are still the fastest known for general instances in the regimes of unbounded and polynomial space.

The idea in all these algorithms is to precompute and store an exponential number of intermediate quantities in a lookup table, and use this to speed up the overall algorithm. In particular, in the Meet-in-the-Middle approach for Subset sum, one splits the $n$ numbers into two halves $L$ and $R$ and computes and stores the sum of all possible $2^{n/2}$ subsets of $L$, and similarly for $R$. Then for every partial sum $a$ of a subset of $L$ one looks up, using binary search, whether there is the partial sum $t-a$ for a subset of $R$. As these approaches inherently use exponential space, an important long-standing question (e.g.~\cite{DBLP:conf/icalp/AustrinKKM13},~\cite{openproblems}, Open Problem 3b in \cite{DBLP:conf/iwpec/Woeginger04}, Open Problem 3.8b in \cite{DBLP:conf/iwpec/2004}) has been to find an algorithm that uses polynomial space and runs in time $O^*(2^{(1-\varepsilon)n})$ for some $\varepsilon >0$. We make some progress in this direction.
 
\subsection{Our Results}\label{subsec:ourres}
\paragraph{Subset Sum, Knapsack and Binary Linear Programming:}
In this paper we give the first polynomial space randomized algorithm for Subset Sum, Knapsack and Binary Linear Programming (with few constraints) that improves over the trivial $O^*(2^n)$ time algorithm for worst-case inputs, under the assumption of having random read-only access to random bits. Our main theorem reads as follows:
\begin{theorem}\label{thm:sss}
	There are Monte Carlo algorithms solving Subset Sum and Knapsack using
	$O^*(2^{0.86n})$ time and polynomial space. The algorithms assume random read-only access to exponentially many random bits.
\end{theorem}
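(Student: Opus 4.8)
The plan is to reduce Subset Sum (and then Knapsack) to the problem of deciding whether two lists share a common value, and to solve that problem in logarithmic space by a pseudorandom-walk collision finder. First I would partition the $n$ items into two groups of sizes $\beta n$ and $(1-\beta)n$ and view the $2^{\beta n}$ and $2^{(1-\beta)n}$ subset sums of the two groups as lists $A$ and $B$ whose $i$-th entry is computable in polynomial time and polynomial space directly from the binary encoding of the index $i$; a Subset Sum solution is then exactly an index pair $(i,j)$ with $A[i] = t - B[j]$, i.e.\ a common value of $A$ and of the list $B' := t - B$. Since these sums are exponentially large, I would hash the universe down to polynomial size by reducing modulo a constant number of uniformly random primes of the appropriate bit-length; with high probability no spurious common value is created (and any candidate found can be verified exactly), so the task becomes: decide whether two implicitly represented lists over a $\poly$-size universe intersect, in small space.

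The technical heart is that list-intersection subroutine. Using the read-only random bits, define a pseudorandom map $F$ on the set of values occurring in the two lists: from a value $v$, use the random bits indexed by $v$ both to decide whether to take an ``$A$-step'' or a ``$B$-step'' and to select a uniformly random position of the chosen list, and set $F(v)$ to be the value stored there. Since $F$ only ever outputs values that genuinely occur in $A$ or $B$, iterating $F$ is a random walk on a ground set of size $O(N)$, where $N$ is the list length; hence after $\tilde O(\sqrt N)$ steps the trajectory collides with itself, and a Pollard-$\rho$ / distinguished-points cycle detector locates this collision in $\tilde O(\sqrt N)$ time and $O(\log N)$ space. A collision $F(u) = F(v)$ in which one of $u, v$ fired an $A$-step and the other a $B$-step is precisely a certified common value --- a ``golden'' collision. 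A short second-moment computation shows that, when no value is repeated too often, a $\Theta(1/N)$ fraction of independent restarts exposes a golden collision, so $\tilde O(N)$ restarts certify intersection with high probability; the total cost is then roughly $\tilde O(N^{3/2})$ time and $O(\log N)$ space, and it degrades only by about a $\sqrt D$ factor if some value is repeated up to $D$ times. This is exactly the claimed speedup over the trivial $O(n^2)$ bound ``if no value occurs too often''.

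The main obstacle is closing the loop for \emph{worst-case} Subset Sum, because the multiplicity hypothesis can fail drastically: by the Littlewood--Offord/Erd\H{o}s bound a single value can be attained by $\Theta(2^{\beta n}/\sqrt n)$ of the $2^{\beta n}$ subsets of one group (e.g.\ if all weights are equal), which would render the subroutine no faster than $2^n$. I would resolve this by a win--win: either the relevant subset sums are spread out (few repetitions), so the subroutine applies as is, or they are concentrated, in which case a few coordinates ``explain'' the concentration and one peels them off and recurses/brute-forces over their assignments (equivalently, one refines the modular bucketing), restoring bounded multiplicity inside each bucket. Optimizing the split fraction $\beta$ together with the concentration threshold to balance these two regimes is what produces the exponent $0.86$ --- rather than the ``ideal'' $0.75 = \tfrac{3}{2}\cdot\tfrac{1}{2}$ --- and making this trade-off precise while staying in polynomial space and using only read-only random bits is the delicate part. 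Finally, to turn detection into the actual witnessing subset (required for the Subset Sum/Knapsack output) I would rerun the collision search constrained to the common value found in the first phase, recovering the two indices and hence the subset; Knapsack reduces to Subset Sum in the usual way, by binary-searching the optimal profit and folding the value constraint into the target, again with no space overhead.
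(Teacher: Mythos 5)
Your overall architecture --- Meet-in-the-Middle, a logspace Pollard-$\rho$/golden-collision subroutine for implicit list intersection whose cost degrades with the multiplicity of values, and a win--win to handle the bad multiplicity regime --- matches the paper, and your description of the collision subroutine (random restarts, $\Omega(1/N)$ success probability per restart, $\tilde O(N^{3/2})$ total, degradation with repetitions) is essentially Theorem~\ref{thm:LD-ST} with $s=O(1)$. The gap is in the second branch of your win--win, which is the actual crux of the worst-case result. You propose that if the half-sums are concentrated then ``a few coordinates explain the concentration'' and can be peeled off or brute-forced. This fails: with all weights equal, the number of pairs of subsets with equal sum is $\sum_k \binom{n}{k}^2 = \binom{2n}{n} = \Theta(4^n/\sqrt{n})$, essentially the maximum possible, and no small set of coordinates is responsible --- the concentration is a global additive phenomenon. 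More generally there is no reason a large pseudo-solution count is witnessed by a sparse substructure one can recurse on, and the paper explicitly flags that large $p$ ``does not directly seem exploitable.''

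What the paper actually does in this branch is qualitatively different and requires two ingredients you do not have. First, an encoding argument in the style of Uniquely Decodable Code Pairs (Lemmas~\ref{lem:dsvscols} and~\ref{lem:sumsvscols}) shows that if the number of equal-sum pairs is large then the number of \emph{distinct} subset sums $|w(2^{[n]})|$ must be small; concretely, over a random half-split one gets $\mathbb{E}[|P^L|]\le O^*(2^{0.72n})$ whenever $|w(2^{[n]})|\ge 2^{0.86n}$. Second, when $|w(2^{[n]})|$ is small one invokes the known polynomial-space, $O^*(|w(2^{[n]})|)$-time algorithm (Lokshtanov--Nederlof / Kaski et al.), not a recursion on coordinates. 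The exponent $0.86$ then comes from balancing $O^*(|w(2^{[n]})|)$ against $O^*(2^{n/2}\sqrt{p})$ via the entropy optimization in the proof (the split stays at $n/2$; there is no free fraction $\beta$ to tune). Without the distinct-sums connection and the distinct-sums algorithm, your argument does not yield any $O^*((2-\varepsilon)^n)$ bound for worst-case instances. Your modular hashing of the value universe and the Knapsack reduction are fine (the paper uses the reduction of Nederlof et al.\ for the latter).
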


In the \emph{Binary Linear Programming} problem we are given $v,a^1,\ldots,a^d \in \mathbb{Z}^n$ and $u_1,\ldots,u_d \in \mathbb{Z}$ and are set to solve the following optimization problem:
\[
\begin{aligned}
& \text{minimize} & & \langle v,x \rangle \\
& \text{subject to}
& & \langle a^j, x \rangle \leq u_j, \text{ for $j=1,\ldots,d$}\\
&&& x_i \in \{0,1\}, \text{ for $i=1,\ldots,n$}.
\end{aligned}
\]
Using reductions from previous work, we obtain the following consequence of Theorem~\ref{thm:sss}:
\begin{corollary}\label{thm:bip}
	There exists a Monte Carlo algorithm which solves Binary Linear Programming instances with maximum absolute integer value $m$ and $d$ constraints in polynomial space and time $O^*(2^{0.86n}(\log(mn)n)^{O(d)})$. The algorithm assumes random read-only access to exponentially many random bits.
\end{corollary}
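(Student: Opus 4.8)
\textbf{Overall strategy.} The plan is to reduce Binary Linear Programming with $d$ constraints to a polynomial number of instances of Subset Sum (on roughly $n$ Boolean variables) and then invoke Theorem~\ref{thm:sss}. The reduction itself is essentially standard; the only new ingredient is that we now have a polynomial-space algorithm running in time $O^*(2^{0.86n})$ to feed it into. First I would eliminate the objective by binary search: scanning $v$ and the $a^j$ once, the quantity $\langle v,x\rangle$ ranges over an interval of length at most $2mn$, so $O(\log(mn))$ rounds of binary search reduce the optimization problem to $O(\log(mn))$ \emph{feasibility} queries of the form ``is there $x\in\{0,1\}^n$ with $\langle a^j,x\rangle\le u_j$ for all $j$ and $\langle v,x\rangle\le V$?'', i.e.\ feasibility problems with $d'=d+1$ linear inequalities over Boolean variables. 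Each round reuses the same read-only random-bit oracle, so this costs only polynomial extra space.

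\textbf{From $d'$ inequalities to one Subset Sum equation.} Next I would turn each inequality $\langle a^j,x\rangle\le u_j$ into an equality $\langle a^j,x\rangle+s_j=u_j$ with a non-negative integer slack $s_j$; since $\langle a^j,x\rangle\in[-mn,mn]$ we may assume $s_j$ lies in an interval of length $O(mn)$ and encode it in binary using $O(\log(mn))$ fresh Boolean variables. Now every constraint is an equality $\langle b^j,y\rangle=u_j$ over the extended Boolean vector $y$ of length $n':=n+O(d\log(mn))$, and every partial sum $\langle b^j,y\rangle$ lies in an interval of length $O(mn)$. Picking a base $M$ exceeding the largest value any single $\langle b^j,y\rangle$ can take, and setting $w_i:=\sum_{j=1}^{d'}M^{\,j-1}b^j_i$ and $t:=\sum_{j=1}^{d'}M^{\,j-1}u_j$, one gets that $y\in\{0,1\}^{n'}$ satisfies all $d'$ equalities simultaneously iff $\sum_{i:\,y_i=1}w_i=t$: since $M$ dominates every coordinate, the $d'$ ``blocks'' of the sum never interfere, so there are no spurious solutions. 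The resulting Subset Sum instance has $n'=n+O(d\log(mn))$ items and integer weights of bit-length $\poly(n,d,\log m)$, hence polynomial in the input size for constant $d$.

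\textbf{Invoking Theorem~\ref{thm:sss} and accounting.} Running the algorithm of Theorem~\ref{thm:sss} on this instance takes $O^*\!\big(2^{0.86\,(n+O(d\log(mn)))}\big)$ time and polynomial space, and uses only read-only access to exponentially many random bits, which we have. Multiplying by the $O(\log(mn))$ binary-search rounds and absorbing the factor $2^{0.86\cdot O(d\log(mn))}$ into the overhead yields total time $O^*\!\big(2^{0.86n}(\log(mn)\,n)^{O(d)}\big)$ and polynomial space, as claimed; here one should use the tightest available encoding of the slack variables so that the $2^{0.86\cdot(\text{auxiliary variables})}$ blow-up stays within $(\log(mn)\,n)^{O(d)}$ rather than growing polynomially in $m$.

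\textbf{Main obstacle.} The delicate points all sit in the second step. One must keep the number of auxiliary (slack) variables small enough that the resulting exponential blow-up matches the claimed overhead, rather than the naive $(mn)^{O(d)}$ one gets from a careless binary encoding; one must choose the packing base $M$ so that the single Subset Sum equation is genuinely carry-free, hence equivalent to the $d'$-dimensional system; and one must check that the reduction from inequalities to equalities and the handling of the objective do not by themselves inflate the variable count. These are exactly the issues addressed by the known Binary Linear Programming--to--Subset Sum reductions, so I would cite those for the precise variable bookkeeping and simply plug their output into Theorem~\ref{thm:sss}.
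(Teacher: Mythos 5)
Your overall skeleton (binary search on the objective, collapse the system of linear constraints into a single Subset Sum instance via base-$M$ packing, invoke Theorem~\ref{thm:sss}) matches the paper, but the middle step is carried out the wrong way and the resulting bound does not match the claim. Encoding each slack $s_j$ in binary forces $\Theta(\log(mn))$ fresh Boolean variables per constraint, so the Subset Sum instance has $n' = n + \Theta(d\log(mn))$ items and the exponential term becomes $2^{0.86 n'} = 2^{0.86n}\cdot (mn)^{\Theta(d)}$. Since $m$ may be super-polynomial in the input size (only $\log m$ is charged to the input), the factor $(mn)^{\Theta(d)}$ is \emph{not} absorbed by $O^*(\cdot)$ and is genuinely larger than the claimed $(\log(mn)\,n)^{O(d)}$. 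You flag this yourself and hope a ``tightest available encoding'' of the slacks fixes it, but no such encoding exists: a slack ranging over an interval of length $\Theta(mn)$ information-theoretically requires $\Theta(\log(mn))$ Boolean variables, so any slack-variable reduction pays $(mn)^{\Omega(d)}$ in the exponent. This is a genuine gap, not bookkeeping.

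The reduction that actually achieves the claimed bound (Lemma~\ref{thm:reduce}, an extension of Nederlof et al.) avoids new variables entirely: by recursively rounding the weights, it converts a single interval constraint $\omega(X)\in[l,u]$ into a \emph{disjunction} of $K = O(n\log(nN))$ equality constraints $\omega_i(X)=t_i$, each over the \emph{same} $n$ variables. Applying this to each of the $d{+}1$ constraints multiplies the number of instances by $(O(n\log(mn)))^{d}$ rather than inflating the variable count, and each resulting all-equality system packs into one Subset Sum instance on $n$ items via the base-$B$ trick you describe. Running Theorem~\ref{thm:sss} on each instance then gives $O^*(2^{0.86n}(\log(mn)\,n)^{O(d)})$ total time. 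So the cost of handling inequalities must be paid as a polynomial (per constraint) blow-up in the \emph{number of Subset Sum calls}, not as extra items inside a single call; your plan puts it in the wrong place.
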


Though read-only access to random bits is a rather uncommon assumption in algorithm design,\footnote{The only exception we are aware of is~\cite{beame2013element} where it is treated as a rather innocent assumption.} such an assumption arises naturally in space-bounded computation (where the working memory is much smaller than the number of random bits needed), and has received more attention in complexity theory (see e.g.~\cite{DBLP:journals/combinatorica/Nisan92,DBLP:journals/tcs/Nisan93,DBLP:conf/coco/Saks96}).

From existing techniques it follows that the assumption is weaker than assuming the existence of sufficiently strong pseudorandom generators. We elaborate more on this in Section~\ref{sec:conc} and refer the reader to~\cite{DBLP:conf/coco/Saks96} for more details on the different models for accessing input random bits.


\paragraph{List Disjointness:} A key ingredient of Theorem~\ref{thm:sss} is an algorithmic result for the {\em List Disjointness} problem.
In this problem one is given (random read-only access to) two lists $x, y \in [m]^n$ with $n$ integers in range $[m]$ with $m\leq \poly(n)$, and the goal is to determine if there exist indices $i, j$ such that $x_{i} = y_{j}$. This problem can be trivially solved by sorting using $\tilde{O}(n)$ time\footnote{The $\tilde{O}(\cdot)$ notation suppresses factor polylogarithmic in the main term.} and $O(n)$ space, or in $\tilde{O}(n^2)$ running time and $O(\log n)$ space using exhaustive search. 
We show that improved running time bounds with low space can be obtained, if the lists do not have too many ``repeated entries".

\begin{theorem}\label{thm:LD-ST}
There is a Monte Carlo algorithm that takes as input an instance $x,y \in [m]^n$ of List Disjointness with $m \leq \poly(n)$, an integer $s$, and an upper bound $p$ on the sum of the squared frequencies of the entries in the lists, i.e. $\sum_{v=1}^m |x^{-1}(v)|^2+|y^{-1}(v)|^2 \leq p$, and solves the List Disjointness instance $x,y$ in time $\tilde{O}(n\sqrt{p/s})$ and $O(s \log n)$ space for any $s \leq n^2/p$. The algorithm assumes random read-only access to a random function $h:[m] \rightarrow [n]$.
\end{theorem}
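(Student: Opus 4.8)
The plan is to solve List Disjointness by a random-walk / collision-finding strategy reminiscent of Pollard's rho and of space-efficient cycle-detection, where the read-only random function $h:[m]\to[n]$ supplies all the randomness needed to define a deterministic walk that we can re-run at will in $O(\log n)$ space. First I would reduce to the decision problem of detecting a \emph{claw}: given $h$, consider the combined multiset of values in $x$ and $y$; a collision $x_i=y_j$ is exactly a pair of indices mapping (via the lists, then via $h$) to a common point, so the walk $f$ on $[n]$ defined by $f(a)=h(x_a)$ composed appropriately with $g(a)=h(y_a)$ will, under the randomness of $h$, have a collision between the $f$-image and the $g$-image of the index sets iff the instance is a YES-instance -- up to the usual false positives $x_i\ne y_j$ but $h(x_i)=h(y_j)$, which occur with controlled probability since $h$ maps into $[n]$ and there are at most $\poly(n)$ values but we will want to amplify/recurse to kill these. (Here $p$ enters because $\sum_v |x^{-1}(v)|^2 + |y^{-1}(v)|^2 \le p$ bounds the number of \emph{within-list} collisions, which are spurious and must be subtracted off or avoided by the hashing.)

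The core subroutine I would build is a space-$O(s\log n)$, time-$\tilde O(n\sqrt{p/s})$ procedure that, given the walk, either reports a genuine cross-list collision or certifies there is none. The idea: iterate the map from $s$ independent random starting points in parallel (this is where the factor $s$ in space is spent), running each for roughly $\sqrt{p/s}\cdot\text{polylog}$ steps; by a birthday-type argument, among the $\Theta(s)\cdot\Theta(\sqrt{p/s})=\Theta(\sqrt{ps})$ nodes visited we expect to hit any fixed collision pair with good probability, because the effective collision space has size $\approx n^2/p$ (after accounting for the $p$ within-list degeneracies) and $\sqrt{ps}\cdot\sqrt{n^2/p}^{-1/2}$-style calculations give constant success once $s \le n^2/p$. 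Detecting whether two of the $\Theta(\sqrt{ps})$ visited nodes actually coincide, without storing them all, is done by a Floyd/Brent-type pointer technique or by using $h$ itself to assign each visited node a random ``distinguished point'' status and only storing the $O(s)$ distinguished points -- the standard van Oorschot--Wiener parallel collision search, which fits in $O(s\log n)$ space. Re-deriving a specific colliding pair $(i,j)$ from a detected node-collision is the usual ``rewind both trails to the meeting point'' step.

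The main obstacle I expect is controlling \emph{spurious} collisions and making the success probability rigorous with only a read-only random $h$ (not fresh randomness per trial). Concretely: (i) within-list collisions $x_i=x_{i'}$ inflate the number of node-collisions by up to $p$, so the naive birthday bound is off by exactly the factor $p$ that appears in the running time -- one has to argue that the walk still "makes progress" at rate $\approx 1/\sqrt{p}$ rather than $1/\sqrt{n}$, i.e. that a cross-list collision is found after $\tilde O(\sqrt{p/s})$ steps per start despite the degeneracy; (ii) false positives where $h(x_i)=h(y_j)$ but $x_i\ne y_j$ must be detected and discarded by checking the actual list values at the recovered indices, and the analysis must show these don't derail the walk too often (they happen with probability $\approx (\text{number of distinct values})/n \le \poly(n)/n$ per step, which is too big -- so likely $h$ maps to $[n^c]$ for a suitable constant, or the polylog slack and a second level of hashing absorbs this; I'd pin this down first). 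Once the expected number of useful steps is nailed, amplification to Monte Carlo correctness is routine: repeat $O(\log n)$ times with different "offsets'' into the random tape, and a YES-instance is caught whp while a NO-instance never produces a verified colliding pair.
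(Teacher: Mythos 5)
Your high-level plan is the right one and matches the paper's in spirit: use $h$ to turn the lists into a pseudorandom walk on $[n]$, run $s$ walks in parallel with distinguished-point-style bookkeeping to get $O(s\log n)$ space, and use a birthday argument in which $p$ (the within-list degeneracy) governs how long a walk can run before it stalls. However, there are three concrete gaps. First, the ``claw'' formulation with two separate maps $f(a)=h(x_a)$ and $g(a)=h(y_a)$ does not by itself give a single $1$-out-regular graph on which Floyd/Brent-style cycle finding operates; the phrase ``composed appropriately'' hides the essential construction. The paper resolves this by merging the two lists into one list $z$ via a random function of the \emph{index} (a random linear form $r$ applied to the binary expansion of $i$ decides whether $z_i=x_i$ or $z_i=y_i$), so that the single walk $f(i)=h(z_i)$ is well defined and the target pair $(i^*,j^*)$ lands on opposite sides with probability $1/4$. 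Without some such merge the walk you want to analyze does not exist.

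Second, your treatment of hash collisions $h(x_i)=h(y_j)$ with $x_i\neq y_j$ is backwards. These are not false positives to be suppressed by enlarging the range of $h$: the range \emph{must} be $[n]$ so that $f$ maps indices to indices, and a spurious coincidence of hash values is simply another ``repetition'' of the walk, handled by checking $z_i=z_j$ on the recovered pair and absorbed into the repetition budget. Enlarging the range to $[n^c]$ as you suggest would break the walk. Relatedly, the step you wave at as ``a birthday-type argument'' is the technical heart of the proof: one must show that the walk, which is a deterministic function of the single fixed $h$, is distributed exactly like a uniform string in $[n]^L$ truncated at its $s$-th repetition (this is where the randomness and read-only access to $h$ are actually used), and then bound the expected number of repetitions in such a string by $O(L^2p/n^2)$ to justify the cutoff. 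Finally, your parameter choice of $\sqrt{p/s}$ steps per start ($\Theta(\sqrt{ps})$ nodes total) is only correct when $p=\Theta(n)$; in general the walk must be truncated at $L=\Theta(n\sqrt{s/p})$ total nodes, since a longer walk incurs more than $s$ repetitions in expectation and exhausts the $s$ restart points. With that $L$, each trial succeeds with probability $\Omega(s/p)$ and the claimed $\tilde O(n\sqrt{p/s})$ bound follows; with your $L$ the walk stalls before completing whenever $p\gg n$.
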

Note the required function $h$ can be simulated using polynomially many random bits.
This result is most useful when $s=O(1)$, and it gives non-trivial bounds when the number of repetitions is not too large, i.e. we can set $p \ll n^2$.
In particular, if $m=\Omega(n)$ and the lists $x$ and $y$ are ``random-like'' then we expect $p=O(n)$ and we get that List Disjointness can be solved in $\tilde{O}(n^{3/2})$ time and $O(\log n)$ space.
Moreover, in this random-like setting, as $s$ increases to $n$, the running time of our algorithm smoothly goes to $\tilde{O}(n)$, matching the sorting based running time.

As List Disjointness arises as a fundamental subroutine in many algorithms (especially those based on the Meet-in-the-Middle approach),  Theorem~\ref{thm:LD-ST} may also be useful for other problems.

\paragraph{k-Sum:} We illustrate that the random read-only access assumption is not needed when an instance has sufficient random properties itself. In particular, we apply the techniques behind Theorem \ref{thm:LD-ST} to solve random instances of the $k$-Sum problem. In this problem we are given $k$ lists $w^1,\ldots,w^k$ each with $n$ integers and an integer $t$ and the goal is to find some $s_i \in w^i$ for each $i$ such that $\sum_{i=1}^n s_i = t$. We solve random instances of this problem in the following precise sense:
\begin{theorem}\label{thm:randksum}
 There is a randomized algorithm that given a constant $k \geq 2$, $k$ uniformly random vectors $w^1,\ldots,w^k \in_R [m]^n$, with $n \leq m \leq \poly(n)$ and $m$ being a multiple of $n$, and a target $t$ (that may be adversarially chosen depending on the lists), finds $s_i \in w^i$ satisfying $\sum_{i=1}^k s_i=t$ if they exist with constant probability using $\tilde{O}(n^{k-0.5})$ time and $O(\log n)$ space.
\end{theorem}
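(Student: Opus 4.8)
The plan is to reduce the $k$-Sum instance to $n^{k-2}$ instances of $2$-Sum on the first two lists, and to solve each such $2$-Sum instance in $\tilde O(n^{3/2})$ time and $O(\log n)$ space by invoking the algorithm (and analysis) behind Theorem~\ref{thm:LD-ST}, using the randomness present in $w^1$ and $w^2$ in place of the random read-only access assumption.

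\paragraph{Reduction to $2$-Sum.} We may assume $k \le t \le km = \poly(n)$, since otherwise no solution exists. For every tuple $(s_3,\dots,s_k)\in w^3\times\cdots\times w^k$ we search for $s_1\in w^1$ and $s_2\in w^2$ with $s_1+s_2=t'$, where $t':=t-\sum_{i=3}^k s_i$; we skip tuples for which $t'\notin[2,2m]$. There are $n^{k-2}$ tuples, each describable by $k-2=O(1)$ indices in $[n]$, so the enumeration costs only $O(\log n)$ space. A $k$-Sum solution exists iff one of these $2$-Sum instances is solvable, and any $2$-Sum solution found is (and can be verified to be) a genuine $k$-Sum solution, so the procedure never reports a false positive. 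For $k=2$ there is a single (empty) tuple and we just solve $(w^1,w^2,t)$ directly.

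\paragraph{One $2$-Sum instance as List Disjointness.} Fix $t'$. Then $(w^1,w^2,t')$ is solvable iff the lists $x:=w^1$ and $y:=(t'-w^2_1,\dots,t'-w^2_n)$ share a common value; after adding the fixed shift $m$ to every entry, both lists lie in $[3m]\subseteq[\poly(n)]$, so this is a List Disjointness instance of the type handled by Theorem~\ref{thm:LD-ST}, and an entry of either list is computable from the input in $O(1)$ time and $O(\log n)$ space. We instantiate the required function $h\colon[\poly(n)]\to[n]$ by the \emph{explicit} map $h(v)=v\bmod n$, which needs no stored randomness. Because $w^1,w^2$ are uniform in $[m]^n$ and $n\mid m$, the induced values $h(x_i)=w^1_i\bmod n$ and $h(y_j)=(t'-w^2_j)\bmod n$ are mutually independent and uniform over $[n]$ (and, since $h$ is a genuine function, equal whenever the underlying entries are equal); this is precisely the role played by $m$ being a multiple of $n$, and it should be enough to make the analysis of the List Disjointness algorithm go through with this $h$. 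Finally, the relevant frequency parameter is $\sum_v|x^{-1}(v)|^2+|y^{-1}(v)|^2=\sum_v|(w^1)^{-1}(v)|^2+\sum_v|(w^2)^{-1}(v)|^2$ (the shift and the reflection $v\mapsto t'-v$ leave it unchanged, and in particular it does not depend on $t'$), whose expectation is at most $2n+2n^2/m\le 4n$ since $m\ge n$. Hence with probability at least $9/10$ over the draw of $w^1,w^2$ this quantity is at most $p:=40n$; we condition on this single event, which is shared by all $n^{k-2}$ calls. Running the algorithm of Theorem~\ref{thm:LD-ST} with this value of $p$ and $s=O(1)$ then uses $O(\log n)$ space and $\tilde O(n\sqrt{p/s})=\tilde O(n^{3/2})$ time, and on the good event it finds a common value with probability at least $2/3$ whenever one exists.

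\paragraph{Wrap-up and the main difficulty.} Summing over all $n^{k-2}$ tuples gives total time $n^{k-2}\cdot\tilde O(n^{3/2})=\tilde O(n^{k-1/2})$ and space $O(\log n)$. If a $k$-Sum solution exists, the tuple agreeing with it produces a solvable $2$-Sum instance, which — conditioned on the good event for $w^1,w^2$ — is solved with probability at least $2/3$; thus the overall success probability is a positive constant. The step that requires genuine care is the middle one: one has to verify that the algorithm behind Theorem~\ref{thm:LD-ST} uses its random oracle $h$ only in ways that remain valid when $h$ is the deterministic map $v\mapsto v\bmod n$ and the randomness instead comes from the uniformly random lists $w^1,w^2$ — that is, re-examining that algorithm's analysis and checking that it relies on the random oracle only through the induced hash values on the list entries behaving like independent uniform elements of $[n]$. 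I expect this to be the main obstacle; the reduction and the bookkeeping around $p$ and the success probability are routine.
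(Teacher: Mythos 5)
Your overall architecture — enumerate the $n^{k-2}$ tuples from the last $k-2$ lists, reduce each to a List Disjointness instance on $w^1$ and a shifted/reflected $w^2$, replace the random oracle by the explicit hash $v\mapsto v\bmod n$ and let the uniformity of $w^1,w^2$ (together with $n\mid m$) supply the randomness, with $p=\Theta(n)$ — is exactly the paper's. However, the step you defer as ``the main obstacle'' is precisely the technical heart of the paper's proof, and leaving it at ``I expect this to work'' is a genuine gap. The paper resolves it with a dedicated lemma (Lemma~\ref{obseqd2}): fixing the solution positions $i^*,j^*$ and conditioning only on $x_{i^*}=y_{j^*}$, it shows by a chain-rule argument over the walk that for every index string $\rho$ containing both $i^*$ and $j^*$, $\Pr[W_L(\{k_1\})=\rho]\ge\Pr[\beta=\rho]$, where $\beta$ is a uniform string in $[n]^L$ cut off at the first repetition with respect to an \emph{artificial} vector $z'$ in which $i^*,j^*$ are the only colliding indices. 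One then applies Lemma~\ref{lem:lwbndprob} to $z'$ with $p=n+1$ and $s=1$ to get the $\Omega((L/n)^2)$ per-iteration success probability. The subtleties this lemma has to handle are real: the walk reveals list entries as it goes, so one must argue that each newly visited index carries a fresh uniform value (this is where distinctness of the visited indices and the fact that at most one of $i^*,j^*$ has been seen so far enter), and one must cope with the correlation $x_{i^*}=y_{j^*}$ imposed by conditioning on a solution. None of this is automatic from ``the induced hash values are independent and uniform.''

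A secondary problem is your plan to condition on the event $p(w^1,w^2)\le 40n$ and then run the hash argument ``on the good event.'' Conditioning on that event destroys the uniformity and independence of the list entries, which is exactly what your replacement-of-the-oracle argument relies on, so you cannot both condition and then claim the induced hash values are i.i.d.\ uniform. The paper avoids this tension entirely: the expected value $\E[p(w^1,w^2)]=\Theta(n)$ is used only to justify the choice of the input parameter $p$ (hence $L$ and the time budget), while the correctness analysis never uses an upper bound on the true collision count of the instance — it goes through the domination by $z'$, whose collision count is $n+2$ by construction. You would need to restructure your argument along these lines (or via a careful union bound over the failure events) rather than conditioning.
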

In particular, this implies an $\tilde{O}(n^{1.5})$ time algorithm using $O(\log n)$ space for the random $2$-Sum problem (without any read-only random bits).
Note that unlike algorithms for worst-case $k$-Sum, we cannot reduce $k$-Sum to $\ell$-Sum on $n^{\lceil k / \ell \rceil}$-sized lists in the usual way as the produced lists will not be random.
By applying Theorem~\ref{thm:LD-ST} we also show how random instances of $k$-Sum for even constant $k$ can be solved in $\tilde{O}(n^{3k/2})$ time and $O(\log n)$ space assuming random read-only access to random bits. See Section~\ref{sec:ksum} for more details.


%
%


\subsection{Our Main Techniques}
\paragraph{Techniques behind Theorem~\ref{thm:sss}:}
To obtain the algorithm for Subset Sum in Theorem~\ref{thm:sss}, we apply the Meet-in-the-Middle approach from~\cite{DBLP:journals/jacm/HorowitzS74} as outlined above, but use Theorem~\ref{thm:LD-ST} instead of using a lookup table as we cannot store the latter in memory. While we can also not store the two lists of partial sums in memory, we can simulate random access to it as a fixed partial sum in these lists is easily computed in $O(n)$ time.
As the running time in Theorem~\ref{thm:LD-ST} depends on the parameter $p$, which we may need to set as large as $2^n$ in general,
this does not directly give us an $O^*((2-\varepsilon)^n)$ time polynomial space algorithm for $\varepsilon > 0$.

To obtain such an improvement, we use a win-win approach, similar to what was suggested by Austrin et al.~\cite{DBLP:conf/stacs/AustrinKKN16}: if $p=O^*(2^{0.72n})$ is a valid upper bound for the sum of the squared frequencies of the entries in the lists, Theorem~\ref{thm:LD-ST} already gives the promised running time. 
On the other hand if $p=O^*(2^{0.72n})$ is not a valid upper bound, then we exploit the particular additive combinatorics of subset sums to show 
that a different polynomial space algorithm works well for this Subset Sum instance.
Note that if $p$ is large then a large number of subsets must have the same sum. This does not directly seem exploitable in general, let alone using only polynomial space.
However, recently Austrin et al.~\cite{DBLP:conf/stacs/AustrinKKN16} developed a technique to show  
that in this case the number of distinct sums $|w(2^{[n]})|:= |\{\sum_{e \in X} w_e : X \in 2^{[n]} \}|$ must be substantially small, and this technique is almost directly applicable here. 
Observe that this connection is completely non-trivial. For example, a-priori it is entirely possible that one particular subset sum occurs $2^{n-1}$ times (so half the subsets have the same sum), and the other $2^{n-1}$ subset sums are all distinct (and hence $|w(2^{[n]})| = 2^{n-1}+1$). Nevertheless, Austrin et al. show that such connections can be obtained by relating the problem to bounding `Uniquely Decodable Code Pairs' (see e.g.~\cite{DBLP:conf/isit/AustrinKKN16} and the references therein), a well-studied notion in information theory. In our setting, we show that if $p=O^*(2^{0.72n})$ is not a valid upper bound as required in Theorem~\ref{thm:LD-ST}, then $|w(2^{[n]})| \leq O^*(2^{0.86n})$ and we can run a $O^*(|w(2^{[n]})|)$ time polynomial space algorithm (see e.g.~\cite{DBLP:conf/iwpec/KaskiKN12,DBLP:conf/stacs/AustrinKKN16}).

To obtain the result for Knapsack (and for Binary Linear Programming) we use a reduction to Subset Sum previously presented by Nederlof et al.~\cite{DBLP:conf/mfcs/NederlofLZ12}. 

\paragraph{Techniques behind Theorem~\ref{thm:LD-ST}:}
At its core, the algorithm behind this theorem relies on space efficient cycle finding algorithms that, given access to a function $f: [n] \rightarrow [n]$, are able to find and sample pairs $i,j$ such that $f(i)=f(j)$ efficiently. To gain any non-trivial improvement over naive search with these algorithms, certain random-like properties of the function $f$ are required. Cycle finding algorithms are well-known in cryptography (a famous example is the Pollard's rho method for factorization) where these random-like properties are assumed to hold, but are less commonly used in the traditional worst case analysis of algorithms.

These cycle finding algorithms were recently applied in a closely related setting by Beame et al.~\cite{beame2013element}. To guarantee the random-like properties of $f$,
\cite{beame2013element} uses the idea of (essentially) `shuffling' $f$ using the assumed random access to the read-only tape with random bits to ensure $f$ has the required properties. They use this to give a substantially improved space-time trade-off for the related Element Distinctness problem, in which one is given a list $x \in [m]^n$ and needs to determine whether all values occur at most once.

We essentially follow the ideas of Beame et al.~\cite{beame2013element} with some minor modifications which are mostly in the analysis.
In our setting, we only care whether there are two indices $i$ and $j$ such that $x_i=y_j$ (and not about the cases where $x_i=x_j$ or $y_j=y_j$). 
To this end, we 
use a simple idea from cryptography (see e.g.~\cite{DBLP:journals/joc/OorschotW99}, and for a non-rigorous application to \emph{random} Subset Sum,~\cite{DBLP:conf/eurocrypt/BeckerCJ11}), which merges the two given lists into one list, and
show that the distribution of returned pairs of indices by the algorithm of Beame et al. is close to a uniform sample.
To allow for a space-time trade-off, we use the elegant extension of the cycle finding algorithm by Beame et al.~\cite{beame2013element}.

\subsection{Related Previous Work}

In this section we briefly outline related previous work, grouped by four of the research areas touched by our work.


\paragraph{Exact algorithms for NP-complete problems.} The goal of solving NP-complete problems exactly using as few resources as possible has witnessed considerable attention in recent years.
The Subset Sum problem and its variants have received a lot of attention in this area, but as mentioned above, nothing better than exhaustive search (for polynomial space)
or the classic results of \cite{DBLP:journals/jacm/HorowitzS74,DBLP:journals/siamcomp/SchroeppelS81} was known. Because of the lack of progress on both questions, several works investigated space-time trade-offs indicating how fast the problems can be solved in $2^{\alpha n}$ space for some $0< \alpha <1$ (see e.g.~\cite{DBLP:conf/icalp/AustrinKKM13,DBLP:conf/crypto/DinurDKS12}). 
A promising research line aimed at improving the running time of~\cite{DBLP:journals/jacm/HorowitzS74} in the worst case is found in~\cite{DBLP:conf/stacs/AustrinKKN15,DBLP:conf/stacs/AustrinKKN16}. One of their results is that instances satisfying $|w(2^{[n]})| \geq 2^{0.997n}$ can be solved in $O^*(2^{0.49991n})$ time, and a curious byproduct is that every instance can either be solved faster than~\cite{DBLP:journals/jacm/HorowitzS74} or faster than the trivial polynomial space algorithm.

Exact algorithms for the Binary Linear Programming were previously studied by Impagliazzo et al.~\cite{impagliazzobip}, and Williams~\cite{DBLP:conf/stoc/Williams14}. Using the notation from Corollary~\ref{thm:bip}, the algorithm of Impagliazzo runs in time $O^*(2^{n(1-\poly(1/c))})$ if $d=cn$, and the algorithm of Williams runs in time $O^*(2^{n(1-\Omega(1/((\log \log m) (\log d)^5)))})$, and both algorithms use exponential space. Our algorithm only improves over brute-force search when $d = o(n/(\log(n \log(mn))))$, but improves over the previous algorithms in the sense that it uses only polynomial space.

\paragraph{Bounded Space Computation.}
In many applications, memory is considered to be a scarcer resource than computation time. Designing algorithms with limited space usage has led to extensive work in areas such as the study of complexity classes $L$ and $NL$ (see e.g.~\cite{DBLP:books/daglib/0023084}) and pseudorandom generators that fool bounded space algorithms (see e.g.~\cite{DBLP:journals/combinatorica/Nisan92}). In particular, its study elicited some surprising non-trivial algorithms for very fundamental computational problems such as graph reachability. In this paper we add the List Disjointness problem to the list.
Space-time trade-offs in many contexts have already been studied before (see e.g.~\cite[Chapter 7]{Levitin:2002:IDA:579301} or for trade-offs Exact algorithms for NP-complete problems,~\cite[Chapter 10]{DBLP:series/txtcs/FominK10}), and space-time lower bounds for the Element Distinctness problem were already obtained by e.g. Ajtai~\cite{ajtai}.

\paragraph{Cryptography.} In the area of cryptography the complexity of Subset Sum on random instances has received considerable attention \cite{DBLP:conf/stacs/FlaxmanP05,DBLP:journals/jacm/LagariasO85,Coster92improvedlow-density}, motivated by cryptographic schemes relying on the hardness of Subset Sum~\cite{DBLP:journals/joc/ImpagliazzoN96,Odlyzko90therise}. In a recent breakthrough Howgrave-Graham and Joux~\cite{DBLP:conf/eurocrypt/Howgrave-GrahamJ10} showed that random instances can be solved faster than the Meet-in-the-Middle approach, specifically in $O^*(2^{0.3113n})$ time. This was subsequently improved in~\cite{DBLP:conf/eurocrypt/BeckerCJ11} where also a $O^*(2^{0.72n})$ time \emph{polynomial space} algorithm was claimed\footnote{While this algorithm served as an inspiration for the current work and the claim looks reasonable, we felt several arguments were missing in the write-up that seem to rely on implicit strong assumptions of sufficient randomness.}.

%

\paragraph{$k$-Sum and List Disjointness.}
These are among the most fundamental algorithmic problems that arise as sub-routines in various problems. Most relevant is the paper by Beame et al.~\cite{beame2013element} which presents an algorithm for the Element Distinctness problem in which one is given a list $x \in [m]^n$ and needs to determine whether all values occur at most once, and our work heavily builds on this algorithm. Assuming the exponential time hypothesis, $k$-Sum requires $n^{o(k)}$ time~\cite{DBLP:conf/soda/PatrascuW10}, and while the problem can be solved in $n^{\lceil k/2 \rceil}$ via Meet-in-the-Middle it is sometimes even conjectured that $k$-Sum requires $n^{k/2 -o(1)}$ time. Before our work, no non-trivial algorithms using only polylogarithmic space were known. 
As List Disjointness and $2$-Sum are (up to a translation of one of the sets) equivalent, our algorithm for List Disjointness also implies new space efficient algorithms for random instances $k$-Sum for $k \geq 2$. Space-time trade-offs for $k$-Sum were studied by Wang~\cite{DBLP:conf/esa/Wang14}, and more recently, by Lincoln et al.~\cite{DBLP:conf/icalp/LincolnWWW16}. Our results improve these trade-offs for sufficiently small space on random instances assuming random access to polynomially many random bits. 

We are not aware of previous work explicitly targeting random $k$-Sum, and would like to mention that it is currently not clear whether random $k$-Sum is easier than $k$-Sum in the sense that it can be solved faster given unbounded space. 

\subsection{Outline of This Paper}
In Section~\ref{sec:prel} we introduce the notation used in this paper. In Section~\ref{sec:ld} we present our List Disjointness algorithm. In Section~\ref{sec:sss} we present our algorithms for the Subset Sum, Knapsack and Binary Linear Programming problems. In Section~\ref{sec:ksum} we present our algorithms for $k$-Sum. Finally, in Section~\ref{sec:conc} we provide concluding remarks and possible directions for further research.

\section{Notation}
\label{sec:prel}
For sets $A,B$ we denote $A^B$ as the set of all vectors with entries from $A$ indexed with elements from $B$. We freely interchange elements from $A^B$ with functions $f:B \rightarrow A$ in the natural way, and let $f^{-1}: A \rightarrow 2^{B}$ denote the inverse of $f$, i.e. for $a \in A$ we let $f^{-1}(a)$ denote $\{b \in B : f(b)=a\}$. If $f: B \rightarrow B$ and $i \geq 0$ we let $f^i(s)$ denote $s$ if $i=0$ and $f(f^{i-1}(s))$ otherwise. If $f: B \rightarrow A$ and $\mathcal{F} \subseteq 2^B$ we also denote $f(\mathcal{F})$ for $\{ f(X) : X \in \mathcal{F}\}$. If $f: B \rightarrow \mathbb{Z}$ and $X \subseteq B$ we shorthand $f(X)=\sum_{e \in X} f(e)$. 

For vectors $x,y \in \mathbb{Z}^{B}$ we denote by $\langle x,y\rangle$ the inner product of $x$ and $y$, i.e. $\langle x,y\rangle= \sum_{i \in B}x_iy_i$. For integers $p,q,r$ we let $q \equiv_p r$ denote that $q$ is equal to $r$ modulo $p$. If $A$ is a set, $a \in_R A$ denotes $a$ is uniformly at random sampled from the set $A$.
We let $\tilde{O}(T)$ denote expressions of the type $O(T\cdot \mathrm{polylog}(T))$, while $O^*()$ denotes expressions of the type $O(T\cdot \poly(|x|))$ where $x$ is a problem instance at hand. For integers $i\leq j$, $[i]$ denotes the set $\{1,\ldots,i\}$ and $[i,j]$ denotes $\{i,i+1,\ldots,j\}$.
If $G=(V,E)$ is a directed graph and $v \in V$, we let $N^-_G(v)$ denote the set of in-neighbors of $v$ and $N^+_G(v)$ denote the set of out-neighbors of $v$. The in-degree of $v$ is $|N^-_G(v)|$; the out-degree of $v$ is $|N^+_G(v)|$, and a graph is $k$-(in/out)-regular if all vertices have (in/out)-degree $k$. 

By Monte Carlo algorithms we mean randomized algorithms with only false negatives happening with constant probability. If the algorithms in this work claim an instance is a YES-instance they can also provide a witness. With random read-only access to a function we mean we can read of any of its evaluations in constant time.

\section{List Disjointness}\label{sec:ld}

In this section we prove Theorem~\ref{thm:LD-ST}, which we restate here for convenience.

\begin{reptheorem}{thm:LD-ST}
There is a Monte Carlo algorithm that takes as input an instance $x,y \in [m]^n$ of List Disjointness with $m \leq \poly(n)$, an integer $s$, and an upper bound $p$ on the sum of the squared frequencies of the entries in the lists, i.e. $\sum_{v=1}^m |x^{-1}(v)|^2+|y^{-1}(v)|^2 \leq p$, and solves the List Disjointness instance $x,y$ in time $\tilde{O}(n\sqrt{p/s})$ and $O(s \log n)$ space for any $s \leq n^2/p$. The algorithm assumes random read-only access to a random function $h:[m] \rightarrow [n]$.
\end{reptheorem}

Naturally, the set $[m]$ above can be safely replaced with any $m$-sized set (and indeed we will apply this theorem in a setting where values can also be negative integers), but for convenience we stick to $[m]$ in this section. We first recall and slightly modify the result from Beame et al.~\cite{beame2013element} in Subsection~\ref{subsec:beame} and afterwards present our algorithm and proof in Subsection~\ref{subsec:ldalg}.
For easier exposition and completeness, we also repeat some ideas and results from \cite{beame2013element}.

\subsection{The Collide Function from Beame et al.} \label{subsec:beame}
At its core, our algorithm for List Disjointness uses a space-efficient subroutine to find cycles in $1$-out-regular directed graphs i.e. graphs in which every vertex has out-degree exactly $1$.
If $G=(V,E)$ is such a graph, its edge relation can be captured by a function $f:V\rightarrow V$ where $f(u)=v$ if vertex $u$ has an outgoing edge to vertex $v$. We say that two vertices $u\neq v$ are \emph{colliding} if $f(u)=f(v)$ and refer to $(u,v)$ as a \emph{colliding pair}. 
For $k \in V$, denote by $f^*(k)$ the set of vertices reachable from $k$ in $G$, or more formally $f^*(k)=\{f^i(k): i \geq 0\}$ (recall from Section~\ref{sec:prel} that $f^{i}$ denotes the result of applying $f$ iteratively $i$ times to $k$). Since $V$ is finite, clearly there exists a unique smallest $i < j$ such that $f^i(k)=f^j(k)$. The classic and elegant Floyd's cycle finding algorithm takes $k \in V$ as input and finds such $i$ and $j$, using read-only access to $f$, $\tilde{O}(|f^*(k)|)$ time and $O(\log n)$ working memory, where $n$ denotes $|V|$.\footnote{Sometimes $O(1)$ memory usage is stated assuming the RAM model with word size $\log(|V|)$, but as $|V|$ may be exponential in the input size in our applications we avoid this assumption.}
See e.g. the textbook by Joux on cryptanalysis~\cite[Section 7.1.1.]{joux2009algorithmic} or Exercises~6 and~7 of Knuth's `Art of Computer Programming'~\cite{knuth1981seminumerical} for more information.

An extension of this algorithm allowing space-time trade-offs was recently proposed by Beame et al.~\cite{beame2013element} to find multiple colliding pairs in $G$ more efficiently than using independent runs of Floyd's algorithm. Formally, given a sequence $K=(k_1,\dots,k_s)$ of $s$ vertices in $G$, define $f^*(K):=\cup_{i=1}^s f^*(k_i)$ to be the set of vertices reachable in $G$ from $K$. Beame et al. show there is a deterministic algorithm that finds all colliding pairs in the graph $G[f^*(K)]$ in time $O(|f^*(K)|\log s\min\{s,\log n\})$ using only $O(s\log n)$ space. 
We will apply this but abort a run after a certain time in order to facilitate our analysis, and use the following definition to describe its behavior.

\begin{definition}
	Given a sequence $K = (k_1,\ldots,k_s)$ and an integer $L$, let $\ell$ be the greatest integer upper bounded by $s$ satisfying $|f^*(\{k_1,\ldots,k_\ell\})|\leq L$. Define $f^*_L(K)=f^*(\{k_1,\ldots,k_\ell\})$.
\end{definition}

\begin{lemma}[Implicit in~\cite{beame2013element}]\label{lem:beameetal}
	There is a deterministic algorithm $\mathtt{Collide}(f, K, L)$ that given read-only access to $f: V \rightarrow V$ describing a $1$-out-regular directed graph $G$ and a set $K$ of $s$ vertices
	  returns the set of pairs $\{(v,N^{-}_G(v) \cap f^*_L(K)): |N^{-}_G(v) \cap f^*_L(K)|>1\}$ using $O(L\log s\min\{s,\log n\})$ time and $O(s \log n)$ space.
\end{lemma}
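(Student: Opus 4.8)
The plan is to describe the Beame et al.\ data structure explicitly and then carefully justify the time and space bounds. The core object is a balanced binary search tree (or a similar pointer structure) that stores, in sorted order by vertex label, the at most $L$ vertices of $f^*_L(K)$ that have been ``settled'' so far, together with the single out-edge $f(v)$ of each settled vertex. The algorithm processes the starting points $k_1,\ldots,k_s$ one at a time. For each new $k_i$ it runs a walk $k_i, f(k_i), f^2(k_i),\ldots$, at each step looking up the current vertex in the tree; if it is already present we have found a collision (two distinct in-edges into the same vertex) and we stop this walk; otherwise we insert the vertex. We stop entirely — and this is where the $L$-truncation comes in — as soon as inserting a vertex would make the tree exceed $L$ elements, i.e.\ we process $k_1,\ldots,k_\ell$ for exactly the $\ell$ of the definition. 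Whenever a collision is detected at a vertex $v$ we record the (at most two, and then growing) list of in-neighbors of $v$ seen within $f^*_L(K)$; at the end we output all $v$ with $|N^-_G(v)\cap f^*_L(K)|>1$ together with that in-neighbor list, exactly as the lemma demands.

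The correctness argument has two parts. First, every colliding pair inside $G[f^*_L(K)]$ is detected: if $u\neq u'$ both lie in $f^*_L(K)$ and $f(u)=f(u')=v$, then $v$ also lies in $f^*_L(K)$ (it is reachable), and the walks that first reach $u$ and $u'$ respectively both eventually arrive at $v$; the second of these walks to reach $v$ finds it already in the tree, so the collision at $v$ is registered, and by maintaining the in-neighbor set of each collision vertex we recover the whole pair (and indeed the whole set $N^-_G(v)\cap f^*_L(K)$). Second, nothing outside $f^*_L(K)$ is ever touched, because we stop at the exact moment the tree would grow past $L$. Here one must be a little careful that ``reachable from $\{k_1,\ldots,k_\ell\}$'' is precisely the set of vertices the walks visit, which is immediate from the definition of $f^*$ and the fact that a walk from $k_i$ is only abandoned upon hitting an already-settled vertex (which is in the set) or upon the global $L$-cap.

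For the resource bounds: the tree holds $O(L)$ vertices and each stores a vertex label and an out-pointer, each $O(\log n)$ bits, and additionally the $O(s)$ collision vertices each carry an in-neighbor list, but the total size of all those lists is $O(L)$ since every vertex contributes to at most one such list; so the space is $O(L\log n)$, and since $L\ge s$ in our applications (or, to be safe, one absorbs the $s$ starting points too) this is $O(s\log n)$ as claimed. For time, the naive bound is $O(L)$ tree operations times $O(\log L)=O(\log n)$ per operation — but this gives $O(L\log n)$, whereas the lemma claims $O(L\log s\min\{s,\log n\})$, which is sharper when $s$ is tiny. The way Beame et al.\ get this is to not use a single balanced tree of depth $\log L$ but rather a more clever structure: keep the settled set partitioned according to which of the $s$ walks first reached it, maintaining $s$ sorted structures of total size $O(L)$, or equivalently use a structure whose depth is $O(\log s)$ at the top with $O(\min\{s,\log n\})$ work at the leaves; I would cite \cite{beame2013element} for the precise construction and just reproduce the accounting: each of the $O(L)$ visited vertices incurs $O(\log s\min\{s,\log n\})$ work. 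The main obstacle, and the one place where I would slow down and follow \cite{beame2013element} verbatim rather than wave hands, is exactly this refined data structure that shaves the per-operation cost from $\log n$ down to $\log s\cdot\min\{s,\log n\}$ while keeping space $O(s\log n)$; the rest (the $L$-truncation and the bookkeeping of in-neighbor sets) is a routine modification of their argument.
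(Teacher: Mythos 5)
There is a genuine gap, and it sits exactly at the heart of what the lemma claims. Your data structure is a balanced search tree holding \emph{all} vertices of $f^*_L(K)$ settled so far, i.e.\ up to $L$ vertices of $O(\log n)$ bits each, which is $\Theta(L\log n)$ space. You then assert that ``since $L\ge s$ \ldots this is $O(s\log n)$'' --- but that inequality points the wrong way: $L\ge s$ makes $O(L\log n)$ \emph{larger} than $O(s\log n)$, not smaller. In every use of this lemma in the paper, $L$ is vastly larger than $s$: the algorithm sets $L=\tfrac12 n\sqrt{s/p}$, and in the Subset Sum application one takes $s=1$ while $L=\Theta^*(2^{0.14n})$ (with the list length $2^{n/2}$ playing the role of $n$). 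Storing the settled set would therefore use exponential space and destroy the polynomial-space claim of Theorem~\ref{thm:sss} outright. The whole content of the lemma is precisely that one can recover all collisions among the $L$ visited vertices \emph{without} storing them: Beame et al.\ keep only $O(1)$ words of state per processed starting point (e.g.\ the collision vertex at which each earlier walk terminated and that walk's length) and detect intersections of the current walk with itself and with earlier walks by Floyd-/Brent-style re-walking and synchronization. That machinery is where both the $O(s\log n)$ space bound and the $O(L\log s\min\{s,\log n\})$ time bound come from; your proposal attributes the time factor to a refined tree and defers it to the citation, but the part you defer is not a per-operation optimization --- it is the entire argument.

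Your correctness discussion (every vertex of $f^*_L(K)$ is visited by exactly the first walk to reach it, every traversed edge $u\to f(u)$ witnesses one in-neighbor of $f(u)$, and the output has total size $O(s)$ since each walk either creates a new pair $(v,X)$ or adds one element to an existing $X$) is fine and matches the paper's informal description. But as written the proof establishes a lemma with space bound $O(L\log n)$, which is a different and, for this paper, useless statement. To repair it you would need to replace the dictionary of settled vertices with the cycle-finding bookkeeping of~\cite{beame2013element} (or reproduce it), not merely cite it for a constant-factor savings in lookup cost.
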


The algorithm of Beame et al.~\cite{beame2013element} can be thought of as follows. First pick the vertex $k_1$ and follow the path dictated by $f$, i.e. $f^1(k_1),f^2(k_1),\ldots,$ until a colliding pair is found (or we revisit the start vertex $k_1$). Thus we stop at the first step $t$ such that there exists a $t'<t$ with $f^{t'}(k_1)=f^{t}(k_1)$. This colliding pair of vertices is reported (if we not revisit the start vertex) and then a similar process is initiated from the next vertex, $k_2$, in $K$ until a colliding pair is found again. Notice that this time a colliding pair might occur because of two reasons: first because the path from $k_2$ intersects itself, and second because the path from $k_2$ intersects the path of $k_1$. We again report the found colliding pair and move on to the next vertex in $K$ and so on. To achieve the aforementioned time and space bounds, Beame et al.~\cite{beame2013element} use Floyd's cycle finding algorithm combined with additional bookkeeping. By inspecting their algorithm it is easily seen that when aborting after $O(L\log s\min\{s,\log n\})$ time, all colliding pairs in $f^*_L(K)$ will be reported.
Note that the output of $\mathtt{Collide}(f, K, L)$ can indeed be described in $O(s \log n)$ space as each vertex in $K$ either gives rise to a new $(v,X)$ pair in the output or adds one vertex to the set $X$ for some pair $(v,X)$.

\subsection{The Algorithm for List Disjointness}\label{subsec:ldalg}

Given a list $z\in [m]^n$, define the number of \textit{pseudo-solutions} of $z$ as 
\[p(z)=\sum_{v=1}^m |z^{-1}(v)|^2 .\]
This is a measure of how many pairs of indices (i.e. `pseudo-solutions') there are in which $z$ has the same value. Note that a similar notion of `pseudo-duplicates' was used by Beame et al.~\cite{beame2013element}. Notice that $p(z) \geq n$ as $ p(z)\ge \sum_v  |z^{-1}(v)|=n$. The running time of our algorithm to find a common value of two lists $x,y$ will depend on the quantity
\[
	p(x,y)=p(x)+p(y).
\]
We refer to a \emph{solution} as a pair of indices $i^*,j^*$ such that $x_{i^*}=y_{j^*}$. On a high level, our algorithm uses the procedure $\mathtt{Collide}$ to obtain samples from the union of the set of solutions and pseudo-solutions of $x$ and $y$. The quantity $p(x,y)$ is thus highly relevant as it indicates the number of samples required to guarantee that a solution will be found with good probability. We first present an algorithm for the following promise version of the List Disjointness problem, from which Theorem~\ref{thm:LD-ST} is an easy consequence.

\begin{lemma}\label{lem:LD-ST}
There is a randomized algorithm that given two lists $x,y \in [m]^n$ with $m \leq \poly(n)$ and a common value given by $x_{i^*}=y_{j^*}$, and positive integers $p(x,y) \leq p$ and $s$ satisfying $s \le n^2/p$, outputs the solution $(i^*,j^*)$ with constant probability using $\tilde{O}\left(n\sqrt{p/s}\right)$ expected time and $O(s \log n)$ space. The algorithm assumes random read-only access to a random function $h:[m] \rightarrow [n]$.
\end{lemma}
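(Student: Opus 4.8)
The plan, following Beame et al.~\cite{beame2013element}, is to reduce the promise problem to repeatedly sampling colliding pairs of a single function built from $x$ and $y$, and checking each sample. First spend $O(n)$ time and $O(\log n)$ space testing whether $x_i=y_i$ for some $i$; if so, output it. Otherwise we may assume there is a solution $(i^*,j^*)$ with $i^*\neq j^*$. Using the van Oorschot--Wiener ``golden collision'' idea, take a uniformly random tag $\tau:[n]\to\{0,1\}$ (a random read-only function, like $h$) and define $f:[n]\to[n]$ by $f(i)=h(x_i)$ if $\tau(i)=0$ and $f(i)=h(y_i)$ if $\tau(i)=1$; then $f$ is evaluable in $\tilde{O}(1)$ time and $O(\log n)$ space. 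If $f(i)=f(i')$ with $\tau(i)=0$, $\tau(i')=1$ then $h(x_i)=h(y_{i'})$, so either $x_i=y_{i'}$ (a genuine solution, recognized in $O(1)$ time) or a hash collision of $h$. Two facts set up the analysis: with probability $\tfrac14$ over $\tau$ we have $\tau(i^*)=0,\ \tau(j^*)=1$, so that $(i^*,j^*)$ is a genuine colliding pair of $f$; and, in expectation over $h$ and $\tau$, $f$ has $O(p)$ colliding pairs altogether (within a constant factor of $p(x,y)$, the extra hash-collision contribution being $O(n)\le O(p)$ and, by Cauchy--Schwarz, the number of genuine solution pairs being at most $\tfrac12 p(x,y)$). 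Hence a uniformly random colliding pair is a genuine solution with probability $\Omega(1/p)$.

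Set $L=\Theta\!\big(n\sqrt{s/p}\big)$ and repeat $r=\Theta(p/s)$ times: draw a fresh uniformly random set $K$ of $s$ vertices, run $\mathtt{Collide}(f,K,L)$ (Lemma~\ref{lem:beameetal}), and for each reported vertex $v$ and each pair $(i,i')$ among its reported in-neighbours with $\tau(i)=0$, $\tau(i')=1$, test $x_i=y_{i'}$; if it holds, output $(i,i')$ and halt. By Lemma~\ref{lem:beameetal} one run costs $O(L\log s\min\{s,\log n\})=\tilde{O}\!\big(n\sqrt{s/p}\big)$ time and $O(s\log n)$ space, and sorting the reported neighbour-sets by value to find a genuine pair adds only $\tilde{O}(s)\le\tilde{O}\!\big(n\sqrt{s/p}\big)$ (here $s\le n^2/p$ is used, which also keeps $L$ within the regime allowed by Lemma~\ref{lem:beameetal}). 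Over the $r$ runs this is $\tilde{O}\!\big((p/s)\cdot n\sqrt{s/p}\big)=\tilde{O}\!\big(n\sqrt{p/s}\big)$ time and at most $O(s\log n)$ space; halting at the first genuine solution makes this the expected (in fact worst-case) running time, faster by the number of genuine colliding pairs when there are many solutions. (Dropping the promise gives Theorem~\ref{thm:LD-ST}: with no common value the same procedure never outputs one.)

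The crux is to show that, conditioned on $(i^*,j^*)$ being a colliding pair of $f$, a single run reports it with probability $\Omega(s/p)$; then $r=\Theta(p/s)$ independent runs find it with constant probability, and together with the $\tfrac14$ from $\tau$ the algorithm succeeds with constant probability. The heuristic is that $f$ is, apart from the coincidences forced by equal list values, a uniformly random function, so the sizes of its iterated images $f([n])\supseteq f^2([n])\supseteq\cdots$ are governed by $p$: in the worst case $f([n])$ (the vertices of positive in-degree) has size $\Theta(n^2/p)$, attained when the $\Theta(p)$ colliding pairs are spread over $\Theta(n^2/p)$ vertices of in-degree $\Theta(p/n)$, i.e. when the lists are $\Theta(n^2/p)$ distinct values each repeated $\Theta(p/n)$ times. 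Since $f([n])$ is closed under $f$, each of the $s$ walks of $\mathtt{Collide}$ enters it after one step and stays; a birthday argument then shows the walks jointly reach $\Theta\!\big(\sqrt{(n^2/p)s}\big)=\Theta(L)$ of its vertices, the bound $s\le n^2/p$ being exactly the condition preventing the walks from saturating the image, so that each image vertex is reached with probability $\Theta\!\big(\sqrt{s/(n^2/p)}\big)$. Thus $i^*$ and $j^*$ are each explored with probability $\Omega\!\big(\sqrt{s/p}\big)$, and since they lie on essentially disjoint parts of the graph (sharing only their image $f(i^*)=f(j^*)$ and what lies below it) these events are nearly independent, giving the claimed $\Omega(s/p)$.

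The main obstacle is making that last paragraph rigorous, i.e. replacing the random-functional-graph heuristics for $f$ by honest bounds: (a) controlling, as a function of $p$, the sizes and layered structure of $f([n]),f^2([n]),\dots$ despite the clustering from high-multiplicity values (a single vertex can have in-degree $\Theta(\sqrt p)$); (b) showing the walks from $K$ jointly reach $\Theta(L)$ distinct image vertices in the allotted time without excessive overlap (the ``all colliding pairs among explored vertices are reported'' part being Lemma~\ref{lem:beameetal}, from \cite{beame2013element}); and (c) a quantitative version of the near-independence of ``$i^*$ explored'' and ``$j^*$ explored'', plus the fact that the $r$ runs yield $\Theta(p)$ effectively-distinct samples rather than heavily correlated ones (the single-run distribution is only close to uniform over colliding pairs, and pairs returned in one run are correlated). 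Beame et al.~\cite{beame2013element} do exactly this analysis for $p=\Theta(n)$; the work here is to carry the $p$-dependence through and to verify $s\le n^2/p$ is the precise threshold for which it still goes through.
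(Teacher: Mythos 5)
Your algorithm is essentially the one in the paper (merge $x$ and $y$ into one list $z$ via a random bit per index, set $f = h\circ z$, take $L=\Theta(n\sqrt{s/p})$, and run $\mathtt{Collide}$ over $\Theta(p/s)$ fresh samples of $K$), and your accounting of the time, space, and the factor $\tfrac14$ from the tag is fine. But the statement you single out as ``the crux'' --- that a single run reports $(i^*,j^*)$ with probability $\Omega(s/p)$ --- is exactly the content of the lemma, and you have not proved it; you explicitly defer it to a random-functional-graph heuristic about the iterated images $f([n])\supseteq f^2([n])\supseteq\cdots$, in-degree clustering, and a birthday argument, and then list (a)--(c) as open obstacles. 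So as written this is a correct algorithm with an acknowledged gap where the proof should be.

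It is worth recording that the paper closes this gap by a route quite different from the one you sketch, and one that avoids all three of your obstacles. Define the walk $W_L(K)$ to be the concatenation of the paths $k_j, f(k_j), f^2(k_j),\ldots$, where one \emph{restarts from the next fresh uniform seed $k_{j+1}$ as soon as the current vertex has a $z$-value already seen earlier in the walk}, truncated at length $L$. With this restart rule, every step of the walk is either a fresh uniform $k_j$ or $h$ evaluated at a value of $z$ not queried before, hence uniform on $[n]$ and independent of the past; consequently $W_L(K)$ is distributed \emph{exactly} as a uniform string in $[n]^L$ cut off at its $s$-th repetition with respect to $z$ (Lemma~\ref{obseqd}). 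No structural control of $f([n])$, in-degrees, or overlap between walks is needed. The desired probability then reduces to two elementary facts about a uniform string $\tilde\beta\in_R[n]^L$: it contains $i^*$ and $j^*$ exactly once with probability $\Omega((L/n)^2)$, and, conditioned on that, it has at most $s$ repetitions with probability $\Omega(1)$ --- the latter by bounding the expected number of collisions by $\binom{L}{2}\cdot 2p/(n-2)^2\le s/3$ (this is where $p(x,y)\le p$ and the choice $L=\tfrac12 n\sqrt{s/p}$ enter) and applying Markov. Your worry (c) about near-independence of the two exploration events and about correlation across runs also dissolves: one lower-bounds the probability of finding the \emph{specific} pair $(i^*,j^*)$ per iteration, rather than arguing that many roughly-uniform colliding-pair samples eventually hit a golden one. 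If you want to complete your write-up, I would abandon the functional-graph analysis and prove the exact distributional identity for the value-restarted walk instead.
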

Note that the algorithm from Lemma~\ref{lem:LD-ST} can be easily turned into a decision algorithm for List Disjointness: run the algorithm provided by Lemma~\ref{lem:LD-ST}, and return ``disjoint" if it is still running after $\tilde{O}(n\sqrt{p/s})$ time. If a solution exists, this algorithm will find a common element with constant probability by Markov's inequality. Thus this lemma implies Theorem~\ref{thm:LD-ST}, and the remainder of this section is devoted to the proof of Lemma~\ref{lem:LD-ST}. To this end, we assume indices $i^*$ and $j^*$ with $x_{i^*}=y_{j^*}$ exist. The algorithm is listed in Figure~\ref{alg:LD-ST}.
\newcommand{\bin}{\mathsf{bin}}
\begin{figure}[H]
\begin{framed}
\begin{algorithmic}[1]
\REQUIRE $\mathtt{LD}(x,y,s,p)$
\item[]\algorithmiccommentt{Assumes random access to a random function $h: [m]\rightarrow [n]$.}
\FOR{$i=1,\ldots,n$}\label{lin:check}
	\LineIf{$x_i=y_i$}{\algorithmicreturn\ $(i,i)$.}\label{lin:trivial2}
\ENDFOR
	\STATE Pick $r=(r_a,r_b)$ where $r_a\in_R \{0,1\}^{\lceil \log_2 (n+1) \rceil}$ and $r_b \in \{0,1\}$.
	\STATE Define  $z : [n] \to [m]$ as
	\item[] $z_i =
	\begin{cases}
		x_i, & \mbox{if } \langle r_a,\bin(i) \rangle \equiv_2 r_b,\\
		y_i, & \mbox{if } \langle r_a,\bin(i) \rangle \equiv_2 1-r_b.
	\end{cases}$\hfill \algorithmiccommentt{$\mathsf{bin}(i)$ denotes the binary expansion of $i$.}
	\STATE Define $f : [n] \to [n]$ as $f(i) = h(z_i)$.
	\STATE $L \gets \tfrac{1}{2}n\sqrt{s/p}$.\label{lin:setL}
\WHILE{$true$} \label{lin:loop2}
	\STATE Sample $K = (k_1,\ldots,k_s) \in_R [n]^s$. \label{lin:sample}
	\STATE $C \gets \mathtt{Collide}(f, K, L)$. \hfill\algorithmiccommentt{$\mathtt{Collide}$ is described in Lemma~\ref{lem:beameetal}.} \label{lin:cycfind2}
	\FOR{$(v, X) \in C$}\label{lin:findc1}
		\LineIf{$i,j \in X$ \algorithmicand\ $\langle r_a, \bin(i) \rangle \not\equiv_2 \langle r_a,\bin(j) \rangle$ \algorithmicand\ $z_i = z_j$}{\algorithmicreturn~$(i,j)$.}\label{lin:findc2}
	\ENDFOR
\ENDWHILE
\end{algorithmic}
\end{framed}
\vspace{-1.5em}
\caption{Applying the collision search technique for list disjointness.}
\label{alg:LD-ST}
\end{figure}
The intuition behind this algorithm is as follows: at Line~\ref{lin:findc2} we have that $i\neq j$ forms a colliding pair, i.e. $f(i)=f(j)=v$, and since $h$ is a random hash function we expect $z_i=z_j$. This implies that $(i,j)$ is either a pseudo-solution or a solution which is checked at Line~\ref{lin:findc2} by the conditions $\langle r_a, \bin(i) \rangle \not\equiv_2 \langle r_a,\bin(j) \rangle$ and $z_i = z_j$. 

Recall that $f$ is naturally seen as a directed graph with vertex set $[n]$ (see Figure~\ref{fig:example} for an example), and therefore we also refer to the indices from $[n]$ as vertices. The main part of the proof of Lemma~\ref{lem:LD-ST} is to lower bound the probability that the solution $(i^*,j^*)$ is found in a single iteration of the loop of Line~\ref{lin:loop2}. Let us consider such an iteration, and use $W_L(K)$ (mnemonic for `walk') to denote the length $L$ prefix of the following sequence of vertices 
\begin{equation*}
	(k_1,f(k_1),\dots,f^{\ell_1}(k_1),k_2,f(k_2),\dots,f^{\ell_2}(k_2),\ldots,k_s,f(k_s),\ldots,f^{\ell_s}(k_s)),
\end{equation*}
where $\ell_i$ is the smallest positive integer such that $z(f^{\ell_i}(k_i))=z(f^{p}(k_j))$ and either $j=i$ and $p<\ell_i$ or $j<i$ and $p< \ell_j$. More intuitively stated, we continue with a next element of $K$ in this sequence whenever we reach an index $i$ for which we already encountered the value $z(i)$. Notice that $f^{\ell_i}(k_i)$ might either have the same $z$ value as a previous \textit{distinct} vertex which means that these two vertices form a solution or a pseudo-solution, or is a repeated vertex which happens due to the hash function $h$ mapping different $z$ values to the same vertex.

Suppose that both $i^*,j^*$ occur in $W_L(K)$, $\langle r_a, \bin(i^*) \rangle\equiv_2r_b$ and $\langle r_a, \bin(j^*) \rangle\equiv_21-r_b$. Then $i^*,j^* \in f^*_L(K)$ as all vertices in $W_L(K)$ not in $f^*_L(K)$ are exactly the vertices of the path $k_\ell,f^1(k_\ell),\ldots$ that was aborted before it intersects either itself or other visited vertices, and thus have unseen values.
Thus we have that $i^*,j^* \in X$, where $(f(i^*),X) \in C$ in Line~\ref{lin:findc1}, and the algorithm will return the solution $(i^*,j^*)$. Because of this, we now lower bound the probability of the event that both $i^*,j^*$ occur in $W_L(K)$. Call $r=(r_a,r_b)$ \emph{good} if it simultaneously holds that $\langle r_a, \bin(i^*) \rangle \equiv_2 r_b$ and $\langle r_a, \bin(j^*) \rangle\equiv_2 1-r_b$ i.e. $z_{i^*}=x_{i^*}$ and $z_{j^*}=y_{j^*}$. First we have the following easy observation.
\begin{observation}\label{obs:good}
If $i^* \neq j^*$, then $\Pr_r[r \text{ is good}] = 1/4$. \label{eqngood}
\end{observation}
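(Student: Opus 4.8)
The plan is to condition on $r_a$ first and deal with the independent bit $r_b$ afterwards. Writing $a := \langle r_a,\bin(i^*)\rangle \bmod 2$ and $b := \langle r_a,\bin(j^*)\rangle \bmod 2$, the definition of ``good'' says precisely that $a \equiv_2 r_b$ and $b \equiv_2 1-r_b$; equivalently, $a \neq b$ and $r_b = a$. Since $r_b$ is drawn uniformly from $\{0,1\}$ independently of $r_a$, I would factor
\[
	\Pr_r[r \text{ is good}] \;=\; \Pr_{r_a}[a \neq b]\cdot \Pr_{r_b}[\,r_b = a \mid a \neq b\,] \;=\; \tfrac12\,\Pr_{r_a}[a \neq b].
\]

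It then remains to show $\Pr_{r_a}[a \neq b] = \tfrac12$. Here $a - b \equiv_2 \langle r_a,\, \bin(i^*)-\bin(j^*)\rangle \equiv_2 \langle r_a, c\rangle$, where $c := \bin(i^*)\oplus \bin(j^*) \in \{0,1\}^{\lceil \log_2(n+1)\rceil}$ is the bitwise XOR of the two binary expansions. Because $i^* \neq j^*$ we have $\bin(i^*)\neq\bin(j^*)$, so $c \neq 0$; fix a coordinate $t$ with $c_t = 1$. Conditioning on all coordinates of $r_a$ except $r_{a,t}$, the parity $\langle r_a,c\rangle \bmod 2$ is an affine function of the single remaining uniform bit $r_{a,t}$ with nonzero leading coefficient, hence uniformly distributed on $\{0,1\}$; averaging over the conditioning gives $\Pr_{r_a}[\langle r_a,c\rangle \equiv_2 1] = \tfrac12$, i.e. $\Pr_{r_a}[a\neq b] = \tfrac12$. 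Plugging this into the previous display yields $\Pr_r[r \text{ is good}] = \tfrac14$.

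I do not expect a real obstacle in this argument. The only points that need care are that the hypothesis $i^*\neq j^*$ is exactly what makes $c$ nonzero (and one should note that for $i^* = j^*$ the probability would instead be $0$, since a good $r$ forces $a \neq b$), and that the independence of $r_b$ from $r_a$ is what legitimizes the factorization in the first display.
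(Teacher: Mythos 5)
Your proof is correct and follows essentially the same route as the paper: the paper also first shows $\Pr_{r_a}[\langle r_a,\bin(i^*)\rangle \not\equiv_2 \langle r_a,\bin(j^*)\rangle]=1/2$ by deferring the choice of $r_a$ at a bit position where $\bin(i^*)$ and $\bin(j^*)$ differ (your conditioning on all coordinates except $r_{a,t}$), and then observes that conditioned on the parities differing, exactly one of the two choices of $r_b$ makes $r$ good. No gaps.
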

Indeed, this follows as $\Pr_r[\langle r_a,\bin(i^*) \rangle\not\equiv_2 \langle r_a,\bin(j^*) \rangle] = 1/2$ (which is observed by deferring the random choice $r_a(j)$ on a bit position $j$ where $\bin(i^*)$ and $\bin(j^*)$ differ) and conditioned on these inner product being of different parities, exactly one choice of $r_b$ leads to $r$ being good.

\begin{figure}[t]
	\centering
		\begin{tikzpicture}[scale=0.5]
		\tikzset{>=triangle 45}
		\begin{scope}[every node/.style={circle,thick,draw}]
		\node (1) at (0,0) {$1$};
		\node (2) at (6,3) {$2$};
		\node (3) at (0,6) {$3$};
		\node (4) at (0,3) {$4$};
		\node (5) at (3,0) {$5$};
		\node (6) at (6,0) {$6$} ;
		\node (7) at (3,6) {$7$} ;
		\node (8) at (6,6) {$8$} ;
		\end{scope}
		
		\path [very thick,->,{bend left = 30}] (1) edge  (4);
		\path [very thick,->,{bend left = 30}] (2) edge  (8);
		\path [very thick,->] (3) edge  (4);
		\path [very thick,->,{bend left = 30}] (4) edge  (1);
		\path [very thick,->,{bend right = 30}] (5) edge  (4);
		\path [very thick,->] (6) edge  (5);
		\path [very thick,->,{bend right = 30}] (7) edge  (2);
		\path [very thick,->,{bend left = 30}] (8) edge  (2);
		\end{tikzpicture}
		\caption{An example of the graph $G$ representing $f$ with $n=8$, $L=\infty$, $z=(11,7,3,8,3,4,1,1)$, $h(v)= (v\bmod 8) +1$ (so $f(i)=(z_i \bmod 8)+1$). If $K=(3,5,7)$ then $W_L(K)=(3,4,1,4,5,7,2,8)$ (note $4$ appears twice as $h$ maps both $11$ and $3$ to $4$) and $\mathtt{Collide}(f, K, L)$ outputs $\{(4,\{1,3,5\}), (2,\{7,8\})\}$.}
		\label{fig:example}
\end{figure}

%
%

We continue with lower bounding the probability that $i^*$ and $j^*$ occur in $W_L(K)$ given that $r$ is good. To do so, we start with an observation which allows us to replace the random string $W_L(K)$ by an easier to analyze random string. A similar connection was also used in~\cite[Proof of Theorem 2.1]{beame2013element}. We use $[n]^L$ to denote the set of all strings of length $L$ over the alphabet $\{1,2,\dots,n\}$. For any string $\beta$ in $[n]^L$, a \emph{repetition with respect to $z$} is said to occur at position $i$ of $\beta$ if there exists $j<i$ such that $z(\beta(i))=z(\beta(j))$. If clear from the context the `with respect to $z$' part is omitted. An $s^{\text{th}}$ repetition is said to occur at position $i$ of $\beta$ if already $s-1$ repetitions have occurred before index $i$ of $\beta$, and a repetition occurs at index $i$.

\begin{lemma}\label{obseqd}
	Fix any value of $s$ and $z$. Let $\beta$ be a random string generated by cutting off a uniform sample from the set $[n]^L$ at the $s^{\text{th}}$ repetition with respect to $z$. Then strings $W_L(K)$ and $\beta$ are equally distributed. I.e. for every $\rho \in [n]^{L'}$ with $L' \leq L$ we have $\Pr_{h,K}[W_L(K)=\rho]=\Pr_\beta[\beta=\rho]$.
\end{lemma}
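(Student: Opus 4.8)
The plan is to establish the distributional identity by exhibiting a single coupled process that simultaneously generates both $W_L(K)$ and $\beta$, step by step, and then argue that at each step the two processes make the same (conditional) random choices. Concretely, I would think of the walk generating $W_L(K)$ as being built one vertex at a time: we maintain a current position in the sequence and a record of which $z$-values have been seen so far (equivalently, which of the $\ell_i$ boundaries we have crossed). At each step there are two cases. Either we are in the middle of following the path from some $k_i$, in which case the next vertex is $f(i) = h(z_i)$ for the current index $i$; or we have just hit a repetition (i.e. $z(f^{\ell_i}(k_i))$ equals a previously seen $z$-value, or more precisely we reached a $z$-value we had encountered), in which case the next vertex is a fresh $k_{i+1}$ sampled uniformly from $[n]$. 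The key observation is that in the \emph{first} case, because $h$ is a uniformly random function and the argument $z_i$ is a value we have \emph{not yet queried} $h$ on along this walk, the value $h(z_i)$ is uniformly distributed over $[n]$ and independent of everything revealed so far; and in the \emph{second} case $k_{i+1}$ is by construction uniform over $[n]$ and independent of the past. Hence, conditioned on the history, the next vertex of $W_L(K)$ is always uniform on $[n]$ — exactly the law of the next symbol of a uniform string in $[n]^L$, read until we decide to stop.

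The second ingredient is that the stopping rule is the same for both sequences: we cut off at the $s^{\text{th}}$ repetition with respect to $z$. For $\beta$ this is the definition. For $W_L(K)$, one must check that ``advancing to the next $k_i$'' happens precisely when a repetition (in the $\beta$-sense) occurs, so that exhausting all $s$ seeds $k_1,\dots,k_s$ corresponds exactly to having seen $s$ repetitions; and that truncating the walk at length $L$ matches truncating $\beta$ at length $L$. The definition of $\ell_i$ in the paper — the smallest positive integer such that $z(f^{\ell_i}(k_i))$ equals some earlier $z$-value on the walk (either earlier on the same $k_i$-path, $p < \ell_i$, or on an earlier $k_j$-path, $p < \ell_j$) — is set up so that the vertex $f^{\ell_i}(k_i)$ is itself a repetition position, after which we move on. So the running count of ``how many $k_i$'s we have used up'' equals the running count of repetitions, and the walk ends either at the $s^{\text{th}}$ repetition or at length $L$, whichever comes first. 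This is exactly the stopping rule for $\beta$.

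Putting these together: by induction on the position $t$, for every prefix $\rho' \in [n]^{t}$ that contains fewer than $s$ repetitions and has $t \le L$, the probability that the first $t$ symbols of $W_L(K)$ equal $\rho'$ is $n^{-t}$, and likewise for $\beta$; and the event ``we stop at this prefix'' is a deterministic function of $\rho'$ (it happens iff $\rho'$ ends in its $s^{\text{th}}$ repetition, or $t = L$) and hence agrees for the two processes. Summing/conditioning appropriately gives $\Pr_{h,K}[W_L(K) = \rho] = \Pr_\beta[\beta = \rho]$ for every $\rho \in [n]^{L'}$ with $L' \le L$. The main obstacle, and the step that deserves the most care, is the independence claim in the first case: one has to verify that along the walk $W_L(K)$ the hash function $h$ is never queried twice on the same value \emph{before} a repetition is declared — which is essentially the definition of a repetition — so that each new vertex of the walk genuinely reveals a fresh, uniform value of $h$, with no conditioning baggage from earlier steps. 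Once that bookkeeping is pinned down (a repetition is exactly the first moment a $z$-value recurs, hence exactly the first moment $h$ would be re-queried), the rest is a routine coupling/induction argument.
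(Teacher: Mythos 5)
Your proposal is correct and follows essentially the same route as the paper's proof: the paper also writes $\Pr_{h,K}[W_L(K)=\rho]$ as a chain of conditional probabilities and shows each next vertex is uniform on $[n]$, splitting into exactly your two cases (a fresh uniform seed $k_{j+1}$ after a repetition, versus $h$ evaluated at a $z$-value not previously queried), together with the observation that both processes stop under the same rule. The independence point you flag as the delicate step is handled in the paper by precisely the argument you give.
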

\begin{proof}
	Note we may assume that the $s^{\text{th}}$ repetition (with respect to $z$) occurred at position $L'$ in $\rho$, or $L'=L$ and at most $s$ repetitions have occurred in $\rho$, as otherwise $\Pr_{h,K}[W_L(K)=\rho]=\Pr_\beta[\beta=\rho]=0$. Assuming $\rho$ has this property, $\Pr_\beta[\beta=\rho]=n^{-L'}$ as the only relevant event is that the first $L'$ locations of the uniform sample from $[n]^{L}$ used to construct $\beta$ match with $\rho$.
	
	For notational convenience, let us denote $\alpha=W_L(K)$. Note that
	\begin{equation}\label{eq:probchain1}
		\Pr_{h,K}[\alpha=\rho] = \prod_{i=1}^{L'} \Pr_{h,K}[\alpha_i=\rho_i | (\alpha_1,\ldots,\alpha_{i-1})=(\rho_1,\ldots,\rho_{i-1})].
	\end{equation}
	Thus to finish the claim it suffices to show that, for every $i \leq L'$, $\alpha_i$ is uniform over $[n]$ given $(\alpha_1,\ldots,\alpha_{i-1})=(\rho_1,\ldots,\rho_{i-1})$. We distinguish two cases based on whether the value $z(\alpha_{i-1})$ equals $z(\alpha_{i'})$ for some $i' < i-1$. If so, it follows from the definition of $W_L(K)=\alpha$ that $i-1=\ell_j$ for some $j$ and $\alpha_i=k_{j+1}$ which is chosen uniformly at random from $[n]$ and independent from all previous outcomes. Otherwise, $\alpha_i=h(z(\alpha_{i-1}))$ and since $z(\alpha_{i-1})$ does not occur in the sequence $z(\alpha_1),\ldots,z(\alpha_{i-2})$ and $h$ is random, $h(z(\alpha_{i-1}))$ is uniform over $[n]$ and independent of all previous outcomes.   
	
	Thus, each term of the product in~\eqref{eq:probchain1} equals $1/n$ and $\Pr_{h,K}[\alpha=\rho]=\Pr_\beta[\beta=\rho]=n^{-L'}$.
\end{proof}

This implies that instead of analyzing the sequence $W_L(K)$ we can work with the above distribution of strings in $[n]^L$ which is easier to analyze, as we will do now:




\begin{lemma}\label{lem:lwbndprob}
Fix any value of $s$ and $z$ with $p(z)\leq p$. Let $\beta$ be a random string generated by cutting off a uniform sample from $[n]^L$ at the $s^\text{th}$ repetition. Then $\Pr_\beta[\beta \text{ contains }i^* \text{ and }j^*] \geq \Omega((L/n)^2)$.
\end{lemma}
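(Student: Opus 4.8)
The plan is to analyze the random string $\beta$ directly by bounding the probability that both $i^*$ and $j^*$ appear among its first $L$ positions. First I would replace $\beta$ by the "un-truncated" string: let $\gamma \in_R [n]^L$ be a uniform string, and let $\beta$ be its prefix up to the $s^{\text{th}}$ repetition. The event ``$\beta$ contains $i^*$ and $j^*$'' is implied by the event that $\gamma$ contains both $i^*$ and $j^*$ \emph{and} the $s^{\text{th}}$ repetition in $\gamma$ occurs at or after the later of these two occurrences (more conveniently: it suffices that $\gamma$ contains $i^*$ and $j^*$ within its first few positions and that fewer than $s$ repetitions occur before those positions). So I would first lower bound $\Pr_\gamma[\gamma \text{ contains } i^* \text{ and } j^* \text{ in first } L \text{ positions}]$, and then subtract the probability that ``too many repetitions happen too early,'' using the hypothesis $p(z) \le p$ together with $s \le n^2/p$ and the choice $L = \tfrac12 n\sqrt{s/p}$.

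For the first part: the probability that a fixed index $i^*$ appears in the first $L$ positions of a uniform string in $[n]^L$ is $1 - (1-1/n)^L \ge \Omega(L/n)$ (using $L \le n$, which holds since $L = \tfrac12 n\sqrt{s/p} \le \tfrac12 n$ as $s \le p$). By inclusion-exclusion or a short direct argument, the probability that \emph{both} $i^*$ and $j^*$ appear in the first $L$ positions is $\Omega((L/n)^2)$ — intuitively, condition on the first occurrence of $i^*$ at some position $t \le L/2$ (which happens with probability $\Omega(L/n)$), and then ask that $j^*$ appears among positions $t+1,\dots,L$, which given the remaining $\ge L/2$ fresh uniform coordinates again happens with probability $\Omega(L/n)$. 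This gives the target bound $\Omega((L/n)^2)$ for the idealized string $\gamma$.

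The main obstacle — and the part that forces the specific choice of $L$ — is showing that the truncation at the $s^{\text{th}}$ repetition does not kill this event, i.e. that with constant probability fewer than $s$ repetitions occur within the relevant prefix. Here I would bound the expected number of repetition positions in a uniform string of length $L$. A repetition occurs at position $i$ if $z(\gamma_i) = z(\gamma_j)$ for some $j < i$; by linearity of expectation the expected number of such (ordered) collision pairs among $L$ uniform coordinates is at most $\binom{L}{2} \cdot \Pr_{a,b \in_R [n]}[z(a)=z(b)] = \binom{L}{2} \cdot p(z)/n^2 \le L^2 p /(2n^2)$. With $L = \tfrac12 n \sqrt{s/p}$ this is at most $s/8$, so by Markov's inequality the number of repetitions in the first $L$ positions exceeds $s$ with probability at most $1/8$. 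Hence with probability $\ge 7/8$ the first $L$ positions of $\gamma$ survive into $\beta$, and intersecting with the $\Omega((L/n)^2)$ event (which, after adjusting constants, I can take to have probability noticeably larger than $1/8$, or simply note $(L/n)^2$ could itself be smaller than $1/8$ — in that regime one instead argues that conditioned on no early truncation the bound still holds) yields $\Pr_\beta[\beta \text{ contains } i^* \text{ and } j^*] \ge \Omega((L/n)^2)$. The one subtlety to handle carefully is that when $(L/n)^2$ is itself $o(1)$, I cannot just use a union bound to subtract the $1/8$ truncation failure; instead I would condition on the truncation \emph{not} happening (a constant-probability event) and observe that conditioning on ``few repetitions in the prefix'' only increases the chance that the prefix looks like a generic uniform string, so the $\Omega((L/n)^2)$ lower bound on containing $i^*$ and $j^*$ is preserved up to a constant factor — this conditioning-monotonicity step is where I expect the write-up to need the most care.
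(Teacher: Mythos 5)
Your overall skeleton matches the paper's: pass to the untruncated uniform string, show the two special indices appear with probability $\Omega((L/n)^2)$, and control the number of repetitions via a first-moment bound with the choice $L=\tfrac12 n\sqrt{s/p}$. However, the step you yourself flag as needing the most care is where the argument genuinely breaks. Your proposed ``conditioning-monotonicity'' --- that conditioning on few repetitions in the prefix only makes the prefix look more like a generic uniform string, preserving the $\Omega((L/n)^2)$ bound --- is not justified and the intuition points the wrong way. Because $z(i^*)=z(j^*)$ (that is the whole point of $i^*,j^*$), any string containing both indices necessarily incurs at least one repetition, so the event ``contains $i^*$ and $j^*$'' is positively correlated with having \emph{more} repetitions; conditioning on few repetitions can only suppress it. A union bound is also unavailable when $(L/n)^2=o(1)$, as you note, so as written the proof does not close.

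The paper resolves this by conditioning in the opposite order: it writes the target probability as $\Pr[E]\cdot\Pr[\text{at most $s$ repetitions}\mid E]$, where $E$ is the event that $i^*$ and $j^*$ each appear exactly once, and then bounds the \emph{conditional} expected number of repetitions given $E$. Concretely, deleting the two special positions leaves a uniform string over $([n]\setminus\{i^*,j^*\})^{L-2}$, and the number of repetitions in the full string exceeds that of the reduced string by at most $1$ plus an indicator involving the value $z(i^*)$; using $|z^{-1}(z(i^*))|\le\sqrt{p}$ and the collision bound $\binom{L}{2}\cdot 2p/(n-2)^2\le s/3$, one gets a conditional expectation of at most $1+5s/6$, whence Markov gives $\Pr[q\le s\mid E]\ge 1/12$. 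Your unconditional bound $\E[q]\le s/8$ does not substitute for this: you must redo the moment computation \emph{after} conditioning on the appearance of $i^*$ and $j^*$, accounting for the repetitions those appearances force. With that replacement your plan becomes the paper's proof; without it there is a real gap.
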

\begin{proof}
Letting $\beta$ be distributed as in the lemma statement, note that

\begin{align}
 \begin{split}
  & \Pr_{\beta}[\beta \text{ contains }i^*, j^*] \geq \Pr_{\tilde{\beta} \in_R [n]^L}[\tilde{\beta} \text{ contains }i^*, j^* \wedge \tilde{\beta} \text{ has at most $s$ repetitions}],
 \end{split}
\end{align}
because if the uniform sample from $[n]^L$ used to construct $\beta$ contains $i^*,j^*$ and has at most $s$ repetitions, $\beta$ contains $i^*$ and $j^*$. Let $E$ be the event that $\tilde{\beta}$ contains $i^*,j^*$ exactly once. Then the latter displayed expression in the inequality above can be further lower bounded as
\begin{align}
	&\geq \Pr_{\tilde{\beta} \in_R [n]^L}[E \wedge \tilde{\beta} \text{ has at most $s$ repetitions}]\nonumber\\
	&= \Pr_{\tilde{\beta} \in_R [n]^L}[E] \label{eq:leftprob}\\
	&\cdot \Pr_{\tilde{\beta} \in_R [n]^L}[\tilde{\beta} \text{ has at most $s$ repetitions}| E]. 	\label{eq:rightprob}
\end{align}

We may lower bound~\eqref{eq:leftprob} as
\begin{equation}
\label{eq:contains}
\begin{aligned}
\Pr_{\tilde{\beta} \in_R [n]^L}[\tilde{\beta} \text{ contains } i^* \text{ and } j^* \text{ exactly once}] &= \Pr_{\tilde{\beta} \in_R [n]^L}[\exists! \ell_1,\ell_2 \le L: \tilde{\beta}(\ell_1) = i^* \wedge \tilde{\beta}(\ell_2) = j^*]\\
	&= L(L-1)(1/n)^2 (1-2/n)^{L-2}\\
	&\geq L(L-1)(1/n)^2(1-2/n)^{n} = \Omega((L/n)^2).
\end{aligned}
\end{equation}
The second equality uses that for each $v\in [n]$, $\Pr[v=\tilde{\beta}(i)]=1/n$ for each $i$, and the $\tilde{\beta}(i)$'s are independent. The second inequality uses that $L\le n$ which is implied by $s\le n^2/p$ and $p \ge n$.

We continue by lower bounding~\eqref{eq:rightprob}. To this end, let $\beta'$ be the string obtained after removing the unique $\ell_1$ and $\ell_2$ indices from $\tilde{\beta}$ which contain $i^*$ and $j^*$ as values respectively. Note that $\beta'$ is uniformly distributed over $([n] \setminus \{i^*,j^*\})^{L-2}$, since we condition on $i^*$ and $j^*$ being contained exactly once in $\tilde{\beta}$. Let $q$ and $q'$ be the number of repetitions in $\tilde{\beta}$ and $\beta'$ respectively. Note that $q-q'=1$ if $z(\beta'(k)) \neq z(i^*)$ for every $k \in [L-2]$ because of the repetition in $\tilde{\beta}$ occurring at $\ell_2$ (assuming $\ell_1 < \ell_2$ for simplicity). Otherwise, $q-q'=2$ because of the repetition in $\tilde{\beta}$ occurring at $\ell_2$ and the repetition in $\tilde{\beta}$ occurring at $\ell_1$ if $z(i^*)=z(\tilde{\beta}(k))$ for $k < \ell_1$ or otherwise at $k$ if $z(i^*)=z(\tilde{\beta}(k))$ for $k \geq \ell_1$. Therefore we have that
\begin{align*}
	\mathbb{E}_{\tilde{\beta} \in_R [n]^L}[q] &= 1+ \Pr_{\beta'}[\exists k \in [L-2]: z(\beta'(k))=z(i^*)]+ \mathbb{E}_{\beta'}[q'] &\\
	 &\leq 1+ (L-2)\frac{|z^{-1}(z(i^*))|}{n-2}+ \mathbb{E}_{\beta'}[q']  & \hfill \algorithmiccommentt{Using $|z^{-1}(z(i^*))| \leq \sqrt{p}$, $\frac{L-2}{n-2}\leq \frac{L}{n}$.}\\
	 &\leq 1+L\sqrt{p}/n+\mathbb{E}_{\beta'}[q'] \leq 1+s/2+\mathbb{E}_{\beta'}[q'],& \hfill \algorithmiccommentt{Using $L=\tfrac{1}{2}n\sqrt{s/p}$.}
\end{align*}
where $\beta' \in_R \{[n] \setminus \{i^*,j^*\} \}^L$.
%
Since the number of repetitions is always less than the number of collisions (i.e. unordered pairs of distinct indices in which the string has equal values), we have the upper bound 
\begin{align*}
	\mathbb{E}_{\beta'}[q'] &\leq \sum_{i<j} \Pr_{\beta'}[z(\beta'(i))=z(\beta'(j))]\\
   				   &\leq \frac{\binom{L}{2} \left( \sum_{v=1}^m |x^{-1}(v)|^2+|y^{-1}(v)|^2+|x^{-1}(v)||y^{-1}(v)| \right) }{ (n-2)^2} \\
				   &\leq \binom{L}{2} 2p / (n-2)^2   \leq s/3,
\end{align*}
where we use the AM-GM inequality in the second inequality to obtain the upper bound $|x^{-1}(v)||y^{-1}(v)|\le |x^{-1}(v)|^2+|y^{-1}(v)|^2$. In the last inequality we assume $n$ is sufficiently large and use $\binom{L}{2} \leq L/2$. Thus $\mathbb{E}_{\tilde{\beta}}[q] \leq 1+5s/6$, and we can use Markov's inequality to upper bound $\Pr[q > s]=\Pr[q \geq s+1]\leq \frac{1+5s/6}{s+1}\leq 11/12$. This lower bounds~\eqref{eq:rightprob} by $1/12$ and therefore the claim follows.
\end{proof}

\begin{proof}[Proof of Lemma~\ref{lem:LD-ST}]
The case $i^*=j^*$ is checked at Line~\ref{lin:trivial2}. So we may assume $i^*\neq j^*$. By Observation~\ref{obs:good}, $r$ is good with probability $1/4$. In the remainder of this proof, we fix such a good $r$ (and thus $z$ is also fixed as it is determined by $r$).

We first analyze the running time performance of a single iteration of the loop from Line~\ref{lin:loop2}. Line~\ref{lin:sample} takes $O(L)$ time and $O(s\log n)$ space as $s\le L$, which follows from the requirement of $s$ being at most $n^2/p$. For Line~\ref{lin:cycfind2}, recall that $\mathtt{Collide}(f,K,L)$ takes $\tilde{O}(L)$ time and $O(s\log n)$ space. Lines~\ref{lin:findc1} and~\ref{lin:findc2} in the algorithm take $\tilde{O}(L)$ time and $O(s\log n)$ space as in Line~$\ref{lin:findc2}$ we can iterate over all elements of $X$ and quickly check whether a new element is in a solution by storing all considered elements of $X$ in a lookup table. 
Thus a single iteration of the while loop from Line~\ref{lin:loop2} uses $\tilde{O}(L)$ time and $O(s\log n)$ space.

As Lines~\ref{lin:check} to~\ref{lin:setL} clearly do not form any bottleneck, it remains to upper bound the expected number of iterations of the while loop before a solution is found.

To this end, recall that a solution $(i^*,j^*)$ is found if $i^*$ and $j^*$ are in $W_L(K)$. The probability that this happens is equal to the probability that $i^*,j^*$ are in a sequence $\beta$ generated by cutting off a uniform sample from the set $[n]^L$ at the $s^{\text{th}}$ repetition by Lemma~\ref{obseqd}. Applying Lemma~\ref{lem:lwbndprob} we can lower bound this probability with $\Omega((L/n)^2)$. This implies that
\[
	\E_{r,h}[\Pr_{K}[\mathtt{Collide}(f, K, L) \text{ detects } (i^*,j^*)]] \geq \Omega((L/n)^2),
\]

and therefore the expected number of iterations of the loop at Line~\ref{lin:loop2} until $(i^*,j^*)$ is found is $O(n^2/L^2)$. Since every iteration takes $\tilde{O}(L)$ time, the expected running time will be $\tilde{O}(n^2/L)$ which is $\tilde{O}(n\sqrt{p/s})$.
\end{proof}

We would like to mention that in the proof of Lemma~\ref{lem:lwbndprob}, upper bounding the number of repetitions by the number of collisions seems rather crude, but we found that tightening this step did not lead to a considerable improvement of an easy $\tilde{O}(n^2/s)$ time $O(s \log n)$ space algorithm for an interesting value of $s$.

\section{Subset Sum, Knapsack and Binary Linear Programming}\label{sec:sss}
The main goal of this section is to prove Theorem~\ref{thm:sss}, which we restate here for convenience.

\begin{reptheorem}{thm:sss}
	There are Monte Carlo algorithms solving Subset Sum and Knapsack using
	$O^*(2^{0.86n})$ time and polynomial space. The algorithms assume random read-only access to random bits.
\end{reptheorem}
We first focus on Subset Sum, and discuss Knapsack later. 
Let $w_1,\ldots,w_n$ be the integers in the Subset Sum instance. 
Throughout this section we assume that $n$ is even, by simply defining $w_{n+1}=0$ if $n$ is odd.
Let $w$ be the weight vector $(w_1,\ldots,w_n)$.

To prove the theorem for Subset Sum, we use a `win-win' approach based on the number of all possible distinct sums $|w(2^{[n]})|=|\{\langle w,x \rangle: x\in \{0,1\}^n\}|$ generated by $w$. Specifically, if $|w(2^{[n]})|$ is sufficiently large, we prove that the number of pairs of subsets with the same sum, or more formally
 \[|\{ (x,y) \in  \{0,1\}^n \times \{0,1\}^n : \langle w,x \rangle = \langle w,y \rangle \}|, \]
cannot be too large (see Lemma~\ref{lem:sumsvscols} below for a more precise statement).
This is done by showing a smoothness property of the distribution of subset sums. The proof of this smoothness property builds for a large part on ideas by Austrin et al.~\cite{DBLP:conf/stacs/AustrinKKN16} in the context of exponential space algorithms for Subset Sum. Then we use the Meet-in-the-Middle approach with a random split $L,R$ of $[n]$, to reduce the Subset Sum instance to an instance of List Disjointness with $2^{n/2}$-dimensional vectors $x$ and $y$ whose entries are the sums of the integers from $w$ indexed by subsets $L$ and $R$, respectively. We apply Theorem~\ref{thm:LD-ST} to solve this instance of List Disjointness. The crux is that, assuming $|w(2^{[n]})|$ to be large, the smoothness property implies we may set the parameter $p$ sufficiently small. On the other hand, if $|w(2^{[n]})|$ is sufficiently small, known techniques can be applied to solve the instance in $O^*(|w(2^{[n]})|)$ time. 

The result for Knapsack and Binary Integer Programming is subsequently obtained via a reduction by Nederlof et al.~\cite{DBLP:conf/mfcs/NederlofLZ12}. 

\paragraph{On the smoothness of the distribution of subset sums.}
The required smoothness property is a quite direct consequence from the following more general bound independent of the subset sum setting. Let us stress here that the complete proof of the smoothness property proof is quite similar to~\cite[Proposition 4.4]{DBLP:conf/isit/AustrinKKN16}, which in turn was inspired by previous work on bounding sizes of Uniquely Decodable Code Pairs (e.g.~\cite{tilborg1}). However, our presentation will be entirely self-contained.
\begin{lemma}\label{lem:dsvscols}
	Let $d \leq n$ be a positive integer. Let $A \subseteq \{0,1\}^n$ and $B \subseteq \{-1,0,1\}^n$ be collections satisfying:
	(i) $|\mathrm{supp}(b)|=d$ for every $b \in B$, i.e.~any $b\in B$ has $d$ non-zero entries and,
	(ii) for every $a,a' \in A$ and $b,b' \in B$ the following holds:
	\begin{equation}\label{eq:udcp}
	a+b=a'+b' \text{ implies that } (a,b)=(a',b').
	\end{equation}
	Then $|A||B| \leq 2^n\binom{n}{\lceil d/2 \rceil}\poly(n)$.
\end{lemma}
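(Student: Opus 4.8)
The plan is to reduce the statement to a counting/entropy argument over the possible values of the sums $a+b$, exploiting that condition~\eqref{eq:udcp} says that the map $(a,b)\mapsto a+b$ is injective on $A\times B$. Hence $|A||B| = |\{a+b : a\in A, b\in B\}|$, and it suffices to upper bound the number of distinct vectors that can arise as such a sum. Each coordinate of $a+b$ lies in $\{-1,0,1,2\}$, but the key restriction is that $b$ has exactly $d$ nonzero coordinates: outside the support of $b$ the sum equals $a_i\in\{0,1\}$, and on the $d$ coordinates of $\mathrm{supp}(b)$ the sum lies in $\{-1,0,1,2\}$. The naive bound $2^{n}\cdot 4^{d}$ is too weak, so the heart of the argument is to shave the $4^d$ down to roughly $\binom{n}{\lceil d/2\rceil}$ up to polynomial factors.

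The main idea I would use is the one from Austrin et al.: decompose $b = b^{+} - b^{-}$ where $b^{+} = \max(b,0)$ and $b^{-} = \max(-b,0)$ are disjoint $0/1$ vectors with $|\mathrm{supp}(b^{+})| + |\mathrm{supp}(b^{-})| = d$. Then $a + b = (a + b^{+}) - b^{-} = a' - b'' $ with everything $0/1$; more usefully, $a+b$ is determined by the pair of $0/1$ vectors $u := a + b^{+}$ (which has entries in $\{0,1\}$ only if $\mathrm{supp}(a)\cap\mathrm{supp}(b^{+})=\varnothing$, else an entry could be $2$) and $b^{-}$, together with the information needed to disambiguate. To make this clean I would instead bound the number of sums by first guessing which of the $d$ support coordinates of $b$ are $+1$ and which are $-1$; say $d_{+}$ are $+1$ and $d_{-} = d - d_{+}$ are $-1$. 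Having fixed the sign pattern's sizes, on the $+1$-coordinates the sum is in $\{0,1,2\}$ and on the $-1$-coordinates the sum is in $\{-1,0,1\}$, so after pulling out which coordinates are "boundary" the count becomes a product of binomials; standard manipulation (Vandermonde / the fact that $\sum_j \binom{d}{j}\binom{n-d}{k-j}$-type sums are dominated by their largest term up to a $\poly(n)$ factor, and that $\binom{d}{j}$ is maximized at $j\approx d/2$) collapses the per-$d$ factor to $\binom{n}{\lceil d/2\rceil}\poly(n)$. Concretely I would aim to show that the number of possible sum-vectors with support-size-$d$ second component is at most $2^{n}\cdot \max_{d_{+}+d_{-}=d} \binom{d}{d_{+}}\cdot(\text{small})$, and that $\max_{d_{+}} \binom{d}{d_{+}} = \binom{d}{\lfloor d/2\rfloor} \le \binom{n}{\lceil d/2\rceil}$ since $d\le n$.

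The step I expect to be the real obstacle is getting the combinatorial bookkeeping exactly right: carefully separating, within the $d$ support coordinates of $b$, the "active" coordinates where adding $b_i$ changes the $0/1$ range into $\{-1,\dots,2\}$ from those that stay in $\{0,1\}$, and then summing the resulting product of binomial coefficients over all the free parameters ($d_{+}$, the number of $2$'s, the number of $-1$'s, their locations) while only losing a $\poly(n)$ factor. This is precisely the place where~\cite[Proposition 4.4]{DBLP:conf/isit/AustrinKKN16} and the Uniquely-Decodable-Code-Pairs literature do the work, so I would follow that template: fix the "type" of the sum vector (a constant number of integer parameters, each in $[0,n]$, hence only $\poly(n)$ many types), bound the count for each fixed type by a single product of binomials, recognize that product as at most $2^{n}\binom{n}{\lceil d/2\rceil}$ using $\binom{a}{b}\binom{c}{d}\le\binom{a+c}{b+d}$-style inequalities together with the disjoint-support decomposition of $b$, and finally multiply by the $\poly(n)$ number of types. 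The injectivity hypothesis~\eqref{eq:udcp} is used once, at the very beginning, to turn $|A||B|$ into a count of distinct sum vectors; everything after that is a pure upper bound on that count and does not reference $A$ or $B$ again.
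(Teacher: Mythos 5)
There is a genuine gap, and it sits exactly where you predicted the obstacle would be. Your plan uses \eqref{eq:udcp} only once, to write $|A||B|=|A+B|$, and then tries to bound the number of \emph{all} vectors realizable as $a+b$ with $|\mathrm{supp}(b)|=d$ by $2^n\binom{n}{\lceil d/2\rceil}\poly(n)$. No such bound holds. Consider every $c\in\{-1,0,1\}^n$ with exactly $d$ entries equal to $-1$ and the remaining entries in $\{0,1\}$: each such $c$ is realizable (set $b_i=-1,a_i=0$ where $c_i=-1$, and $b_i=0,a_i=c_i$ elsewhere), so the set of realizable sums has size at least $\binom{n}{d}2^{n-d}=2^n\binom{n}{d}2^{-d}$. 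For small $\delta=d/n$ this strictly exceeds $2^n\binom{n}{\lceil d/2\rceil}\poly(n)$ (e.g.\ at $\delta=0.01$ one has $h(\delta)-\delta\approx 0.071$ versus $h(\delta/2)\approx 0.045$ in the exponent), so a "pure upper bound on that count that does not reference $A$ or $B$ again" cannot possibly prove the lemma. The factor you hope to exploit, $\max_{d_+}\binom{d}{d_+}$ from guessing the sign pattern of $b$, is not where the saving comes from: the expensive object is the choice of locations of the coordinates where the sum equals $-1$ or $2$ (the "visible" part of $\mathrm{supp}(b)$), and there can be up to $d$ of these, which is what produces the $\binom{n}{d}$ barrier.

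The missing idea is that \eqref{eq:udcp} must be used a \emph{second} time, on differences: $a-b=a'-b'$ implies $a+b'=a'+b$, so $(a,b)\mapsto a-b$ is also injective and $|A||B|=|A-B|$. The paper fixes, as you do, the type $x_{u,v}=|a^{-1}(u)\cap b^{-1}(v)|$ (only $\poly(n)$ types), and then observes that within a type the sum $a+b$ is determined by its set of odd coordinates together with the sets $a^{-1}(0)\cap b^{-1}(-1)$ and $a^{-1}(1)\cap b^{-1}(1)$, giving a count of essentially $2^n\binom{n}{x_{0,-1}+x_{1,1}}$, while the difference $a-b$ is determined by the analogous data with the roles of $b^{-1}(1)$ and $b^{-1}(-1)$ swapped, giving essentially $2^n\binom{n}{x_{0,1}+x_{1,-1}}$. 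Since $x_{0,-1}+x_{1,1}+x_{0,1}+x_{1,-1}=d$, for each type at least one of the two counts involves at most $d/2$ special coordinates, and taking the smaller of the two bounds per type yields $2^n\binom{n}{\lceil d/2\rceil}\poly(n)$. Your type decomposition and the injectivity-to-counting reduction are right; without bringing $A-B$ into the argument, however, the bookkeeping you describe cannot get below $\binom{n}{d}$.
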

\begin{proof}
	Note we may assume $d$ and $n$ are even since we can otherwise increase $n$ to $n+1$ and set $a_{n+1} = b_{n+1}=1$ (if we want to increase $d$) or $a_{n+1} = b_{n+1}=0$ (if we want to increase $n$) for every $b \in B$ and $a \in A$. The increases of $n$ and $d$ are subsumed by the $\poly(n)$ factor.
	
	The idea of the proof will be to give a short encoding of the pairs $(a,b) \in A \times B$ using the above properties.
	We use the standard sumset notations $A + B = \{a+b: a\in A, b\in B\}$ and $A-B=\{a-b: a \in A, b\in B\}$, where the additions of the vectors are in $\mathbb{Z}^n$. Note that~\eqref{eq:udcp} implies that $|A+B|=|A||B|=|A-B|$ as $a-b=a'-b'$ implies $a+b'=a'+b$.
	
	For a vector $a\in A$, recall that $a^{-1}(0)$ denotes the set of indices $i$ in $[n]$ with $a(i)=0$. Similarly, define the sets $a^{-1}(1)$ and $b^{-1}(u)$ for $u \in \{-1,0,1\}$. 
	For a pair $(a,b) \in A \times B$, let us define its signature $x(a,b)$ as the vector 
		\[
	x(a,b) = x =(x_{0,-1},x_{0,0},x_{0,1},x_{1,-1},x_{1,0},x_{1,1}) \in \mathbb{N}^6,
	\]
	where $x_{u,v} = |a^{-1}(u) \cap b^{-1}(v)|$ for each $u \in \{0,1\}$ and each $v \in \{-1,0,1\}$.
	Note that $x$ can take at most $(n+1)^6$ possible values.
	
	For a fixed signature vector $x$, define $P_x$ as the set of all pairs $(a,b) \in A \times B$ with signature $x(a,b)=x$. 
So, if $(a,b)\in P_x$, $x$ indicates for each combination of possible values the number of indices in which this combination occurs in the pair $(a,b)$.
For a vector $c \in \mathbb{Z}^n$, we let $ODD(c) \subseteq [n]$ denote the set of indices $i$ such that $c_i$ is odd. Letting $x_{odd}=x_{0,-1}+x_{0,1}+x_{1,0}$, note that for $(a,b) \in P_x$, we thus have that $|ODD(a+b)|=x_{odd}$.
 
To bound the number of pairs $(a,b)$, we will bound $P_x$ for each $x$. 
As $(a,b) \in A \times B$ is determined by $a+b$, it suffices to bound the number of sums $a+b$. Fix some $(a,b) \in P_x$. A crucial observation is that
 $a+b$ has entries from
$\{-1,0,1,2\}$, and a $2$ precisely occurs in the indices $a^{-1}(1) \cap b^{-1}(1)$ and a $-1$ precisely occurs at the indices in $a^{-1}(0) \cap b^{-1}(-1)$. So $a+b$  can be completely determined by specifying the triple 
    \[
	    (ODD(a+b),\ a^{-1}(0)\cap b^{-1}(-1),\ a^{-1}(1)\cap b^{-1}(1)).
	\]
Thus, we may bound $P_x$ by counting the number of possibilities of such triples to obtain
	\begin{equation}\label{eq:bnd1}
	\begin{aligned}
	|P_x| &\leq \binom{n}{x_{odd}}\binom{x_{odd}}{x_{0,-1}}\binom{n-x_{odd}}{x_{1,1}} \\ 
	& = \frac{n!}{ (x_{0,-1})! (x_{odd}-x_{0,-1})! (x_{1,1})! (n-x_{odd}-x_{1,1})!} \\
	& =\binom{n}{x_{0,-1},x_{odd}-x_{0,-1},x_{1,1},n-x_{odd}-x_{1,1}}.
	\end{aligned}
	\end{equation}
	Similarly, since $(a,b) \in A \times B$ is also determined by $a-b$, which is in turn determined by the triple
	\[
		(ODD(a+b),a^{-1}(0)\cap b^{-1}(1),a^{-1}(1)\cap b^{-1}(-1)),
	\]
	we may also bound $P_x$ by counting the number of possibilities of these triples to obtain
	\begin{equation}\label{eq:bnd2}
	\begin{aligned}
	|P_x| &\leq \binom{n}{x_{odd}}\binom{x_{odd}}{x_{0,1}}\binom{n-x_{odd}}{x_{1,-1}}\\
	&=\binom{n}{x_{0,1},x_{odd}-x_{0,1},x_{1,-1},n-x_{odd}-x_{1,-1}}.
	\end{aligned}
	\end{equation}
	Note that~\eqref{eq:bnd1} and~\eqref{eq:bnd2} are equivalent modulo interchanging $x_{0,-1}$ with $x_{0,1}$ and $x_{1,1}$ with $x_{1,-1}$ (which is natural as $A-B= A+(-B)$).
	We know that $x_{0,-1}+x_{1,1}+x_{0,1}+x_{1,-1}=d$, and therefore either $x_{0,-1}+x_{1,1}\leq d/2$ or $x_{0,1}+x_{1,-1}\leq d/2$. Using~\eqref{eq:bnd1} in the first case or~\eqref{eq:bnd2} in the second case we obtain
	\begin{align*}
	|P_x| &\leq  \max_{d_1+d_2 \leq d/2}  \binom{n}{d_1,x_{odd}-d_1,d_2,n-x_{odd}-d_2}  \\
	&=  \max_{d' \leq d/2} \max_{d_1+d_2 = d'}  \binom{n}{d_1,x_{odd}-d_1,d_2,n-x_{odd}-d_2}
		\end{align*}
		For any $u,v$ with $u+v=s$, the term  $u! v!$ is minimized when $u=\lfloor s/2 \rfloor$ and $v=\lceil s/2 \rceil$. Applying this to the term above, once with $u=d_1$ and $v=d_2$ and once with 
		$u=x_{odd}-d_1$ and $v=n-x_{odd}-d_2$, we obtain that 
	\begin{align*}
	& |P_x| \leq   \max_{d' \leq d/2}  \binom{n}{\lceil d'/2 \rceil,\lfloor (n-d')/2 \rfloor, \lfloor d'/2 \rfloor, \lceil (n-d')/2 \rceil} \\
	&\leq  \max_{d' \leq d/2}  2^n \binom{n/2}{\lceil d'/2 \rceil}\binom{n/2}{\lfloor d'/2 \rfloor}  \\
	&\leq  \max_{d' \leq d/2}  2^n \binom{n}{d'} \leq 2^n \binom{n}{d/2}.
	\end{align*}
	The last step uses that $\binom{n}{u}\binom{n}{v} \leq \binom{2n}{u+v}$ for any $u,v$ and that $d/2 \leq n/2$ and hence the maximum is attained at $d'=d/2$.
	The lemma now follows directly from this bound as $|A||B|\leq \sum_{x}|P_x|\leq 2^n \binom{n}{d/2} (n+1)^6$.
\end{proof}
Now we use Lemma~\ref{lem:dsvscols} to obtain the promised smoothness property of the distribution of subset sums.
\begin{lemma}\label{lem:sumsvscols}
Let $w=(w_1,\ldots,w_n)$ be integers and $d \leq n$ be a positive integer. Denote $C_d= \{ x\in \{-1,0,1\}^n: \langle w,x\rangle =0 \wedge |\mathrm{supp}(x)|=d \}$. Then
\[
	|w(2^{[n]})|\cdot |C_d| \leq 2^n \binom{n}{d/2}\poly(n).
\]
\end{lemma}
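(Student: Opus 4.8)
The plan is to derive Lemma~\ref{lem:sumsvscols} from Lemma~\ref{lem:dsvscols} by choosing the sets $A$ and $B$ appropriately. The natural choice is to take $B = C_d$, which by definition consists of vectors in $\{-1,0,1\}^n$ with exactly $d$ non-zero entries, so hypothesis~(i) of Lemma~\ref{lem:dsvscols} is satisfied for free. For $A$, the goal is to capture the quantity $|w(2^{[n]})|$, so I would like $A$ to be a collection of $0/1$-vectors all having \emph{distinct} subset sums, i.e.\ a set of representatives — one subset $X\subseteq[n]$ for each value in $w(2^{[n]})$. Then $|A| = |w(2^{[n]})|$ and $|B| = |C_d|$, and Lemma~\ref{lem:dsvscols} would immediately give the claimed bound, \emph{provided} the crucial condition~\eqref{eq:udcp} holds.

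The main work, and the main obstacle, is verifying condition~\eqref{eq:udcp}: for all $a,a'\in A$ and $b,b'\in C_d$, if $a+b = a'+b'$ then $(a,b)=(a',b')$. Here is where the structure of $A$ and $C_d$ must be used. Suppose $a+b=a'+b'$. Taking inner products with $w$ gives $\langle w,a\rangle + \langle w,b\rangle = \langle w,a'\rangle + \langle w,b'\rangle$. Since every element of $C_d$ is orthogonal to $w$ (that is what $\langle w,x\rangle = 0$ means), we get $\langle w,a\rangle = \langle w,a'\rangle$. But $A$ was chosen so that distinct elements have distinct subset sums, hence $a = a'$. Substituting back into $a+b=a'+b'$ yields $b = b'$, and so $(a,b)=(a',b')$ as required. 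This is the heart of the argument: the subset-sum constraint on $C_d$ is exactly what converts the sumset-injectivity hypothesis of Lemma~\ref{lem:dsvscols} into a statement about the number of distinct subset sums.

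With~\eqref{eq:udcp} verified, the rest is immediate: apply Lemma~\ref{lem:dsvscols} with these $A,B$ to conclude $|w(2^{[n]})|\cdot|C_d| = |A|\cdot|B| \leq 2^n\binom{n}{\lceil d/2\rceil}\poly(n)$, and since $\binom{n}{\lceil d/2\rceil} \leq \binom{n}{d/2}\poly(n)$ (indeed they differ by at most a polynomial factor, or one can simply absorb the ceiling into the $\poly(n)$), the lemma follows. One minor point to be careful about: if $C_d$ is empty the statement is trivial, and the choice of $A$ as a set of distinct-sum representatives is always possible by picking, for each achievable sum, any one subset realizing it. So I would structure the proof as: (1) handle $C_d=\emptyset$ trivially (or just note it is covered); (2) define $A$ as a system of representatives of the subset sums and $B=C_d$; (3) verify~\eqref{eq:udcp} via the orthogonality-plus-distinctness argument above; (4) invoke Lemma~\ref{lem:dsvscols} and clean up the binomial coefficient. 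I do not anticipate any serious difficulty beyond step~(3), which is short but is the conceptual crux.
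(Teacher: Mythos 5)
Your proposal is correct and follows essentially the same route as the paper: take $B=C_d$, let $A$ be a system of representatives with pairwise distinct subset sums (so $|A|=|w(2^{[n]})|$), verify condition~\eqref{eq:udcp} by pairing with $w$ and using $\langle w,b\rangle=\langle w,b'\rangle=0$, and invoke Lemma~\ref{lem:dsvscols}. The only discrepancy is the cosmetic $\lceil d/2\rceil$ versus $d/2$ in the binomial coefficient, which you correctly absorb into the $\poly(n)$ factor.
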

\begin{proof}
Let $A \in \{0,1\}^n$ be a set of vectors such that $\langle w,a \rangle = \langle w,a' \rangle$ implies $a=a'$ for every $a,a' \in A$.
Note that such an $A$ satisfying $|A|=|w(2^{[n]})|$ can be found by picking one representative from the set $\{x \in \{0,1\}^n : \langle w,x \rangle = s \}$ for each $s \in w(2^{[n]})$. We apply Lemma~\ref{lem:dsvscols} with $B=C_d$. It remains to show that this pair $(A,B)$ satisfies~\eqref{eq:udcp}. To this end, note that if $a+b=a'+b'$, then 
\[
	\langle a,w \rangle = \langle a,w \rangle + \langle b,w \rangle=\langle a+b,w \rangle =	\langle a'+b',w \rangle=\langle a',w \rangle + \langle b',w \rangle=\langle a',w \rangle,
\]
using linearity of inner product and $\langle b,w \rangle=\langle b',w \rangle=0$ by definition of $B$. Therefore, $a=a'$ by definition of $A$ and since $a+b=a'+b'$ it follows that $b=b'$. 
\end{proof}

Now we are fully equipped to prove the first part of the main theorem of this section:

\begin{proof}[Proof of Theorem~\ref{thm:sss}(a)]
The algorithm is as follows

\begin{figure}[H]
\begin{framed}
\begin{algorithmic}[1]
\REQUIRE $\mathsf{SSS}(w,t,h)$
 \item[]\algorithmiccommentt{Assumes random access to a random function $h: [\sum_{i=1}^n w_i]\rightarrow \{0,1\}^{n/2}$.}
 	\STATE Run a polynomial space, $O^*(2^{0.86n})$ time algorithm that assumes $|w(2^{[n]})|\leq O^*(2^{0.86n})$ from e.g.~\cite[Theorem 1(a)]{DBLP:conf/iwpec/KaskiKN12} or~\cite{DBLP:conf/stacs/AustrinKKN16}.\label{lin:dft}
    \STATE Let $(L,R)$ be random partition of $[n]$ with $|L|=n/2$ and $|R|=n/2$.
		\label{lin:split}
	\STATE Let $x$ be the list  $(\sum_{e\in X}w_e)_{X \subseteq L}$ of length $2^{n/2}$.
	\STATE Let $y$ be the list  $(t-\sum_{e\in Y}w_e)_{Y \subseteq R}$ of length $2^{n/2}$.
	\STATE Run $\mathsf{LD}(x,y,1,O^*(2^{0.72n}))$ and cut off the running time after $O^*(2^{0.86n})$ time if a solution was still not found. \label{lin:LD}
	\LineIfElse{so far no solution was found}{\algorithmicreturn\ NO}{\algorithmicreturn\ YES}
\end{algorithmic}
\end{framed}
\vspace{-1.5em}
\caption{Reducing SSS to List Disjointness.}
\label{alg:SSS}
\end{figure}

Here, the algorithm for Line~\ref{lin:dft} is implemented by hashing all integers of the Subset Sum instance modulo a prime of order $O^*(2^{0.86n})$ and running the algorithm by Lokshtanov and Nederlof~\cite{DBLP:conf/stoc/LokshtanovN10} (as already suggested and used in~\cite{DBLP:conf/iwpec/KaskiKN12, DBLP:conf/stacs/AustrinKKN16}). By checking whether a correct solution has been found by self reduction and returning NO if not, we can assume that the algorithm has no false positives and false negatives with constant probability assuming $|w(2^{[n]})|\leq O^*(2^{0.86n})$.

We continue by analyzing this algorithm. First, it is clear that this algorithm runs in $O^*(2^{0.86n})$ time. For correctness, note that the algorithm never returns false positives as the algorithms invoked on Lines~\ref{lin:dft} and~\ref{lin:LD} also have this property. Thus it remains to upper bound the probability of false negatives. Suppose a solution exists. If $|w(2^{[n]})|\leq O^*(2^{0.86n})$, Line~\ref{lin:dft} finds a solution with constant probability, so suppose this is not the case. As in Lemma~\ref{lem:sumsvscols}, denote
\[
	C_d = \{ x\in \{-1,0,1\}^n: \langle w,x\rangle =0 \wedge |\mathrm{supp}(x)|=d  \}.
\]
Then we know by Lemma~\ref{lem:sumsvscols} that $2^{0.86n}|C_d| \leq O^*(2^{n}\binom{n}{\lceil d/2 \rceil }$), so $C_d \leq O^*(2^{0.14n}\binom{n}{\lceil d/2 \rceil })$. Let $w_L,w_R$ denote the restrictions of the vector $w$ to all indices from $L$ and $R$ respectively. Let us further denote
\begin{align*}
	P^L &=  \{(x,y) \in \{0,1\}^{L} \times \{0,1\}^{L}: \langle w_L, x \rangle = \langle w_L, y \rangle  \},\\
	P^R &=  \{(x,y) \in \{0,1\}^{R} \times \{0,1\}^{R}: \langle w_R, x \rangle = \langle w_R, y \rangle  \},\\
	p &= \max \{|P^L|,|P^R|\}.
\end{align*}
As $|P^L|$ and $|P^R|$ are exactly the number of pseudo-solutions of the List Disjointness instance $(x,y)$, it remains to show that $p\leq O^*(2^{0.72n})$ with constant probability. In particular, we will show that $|P^L| \leq O^*(2^{0.72n})$ with probability at least $3/4$. As $|P^L|$ and $|P^R|$ are identically distributed a union bound shows that $p \leq O^*(2^{0.72n})$ with probability at least $1/2$. This suffices for our purposes as then the running time of the List Disjointness algorithm is $O^*(2^{n/2}\sqrt{p})=O^*(2^{0.86n})$.
 
Note that elements of $C_d$ may contribute to $P^L$, but only if their support is not split, e.g. the support is a subset of either $L$ or $R$. Let us introduce the following notation for elements of $C_d$ that contribute to $p$:
\[C^L_d = \{ x \in C_d: \mathrm{supp}(x) \subseteq L \}.\]
Suppose that $(x,y) \in P^L$. Then $\langle w_L,x-y \rangle = 0$ and therefore $x-y \in C^L_d$. Now suppose $|\mathrm{supp}(x-y)|=d$. Then for a fixed vector $v=x-y$, there are only $2^{n/2-d}$ choices for the pair $(x,y)$ such that $v=x-y$, as for each index $i \in L$ with $v_i=1$ we have $x_i=1$ and $y_i=0$, for each index $i \in L$ with $v_i=-1$ we have $x_i=0$ and $y_i=1$, and for each $i \in L$ with $v_i=0$, either $x_i=y_i=0$ or $x_i=y_i=1$. Therefore we have the upper bound
\begin{equation}\label{eqpl1}
	|P^L| \leq \sum_{d=0}^{n/2} |C^L_d| 2^{n/2-d}.
\end{equation}
If $v \in C_d$, for a random split $(L,R)$ as picked in Line~\ref{lin:split} we see that
\begin{equation}\label{eqpl2}
	\Pr[v \in C^L_d] = \Pr[\mathrm{supp}(v) \subseteq L] = \binom{n-d}{n/2-d} / \binom{n}{n/2}. 
\end{equation}
Now we can combine all the work to bound the expectation of $|P^L|$ over the random split as
\begin{align*}
	\mathbb{E}[|P^L|] &\leq \sum_{d=0}^{n/2} \mathbb{E}[C^L_d] 2^{n/2-d} & \hfill \algorithmiccommentt{Using~\eqref{eqpl1}}\\
&\leq \sum_{d=0}^{n/2} \sum_{v \in C_d} \Pr[v \in C^L_d] 2^{n/2-d} & \\
 &= \sum_{d=0}^{n/2} \left(|C_d| \binom{n-d}{n/2-d}/\binom{n}{n/2} \right) 2^{n/2-d} &\hfill \hfill \algorithmiccommentt{Using~\eqref{eqpl2}}\\
 &= O^*\left( \sum_{d=0}^{n/2} 2^{0.14n}\binom{n}{\lceil d/2 \rceil} \binom{n-d}{n/2-d} /\binom{n}{n/2}\right) 2^{n/2-d} & \hfill \algorithmiccommentt{By Lemma~\ref{lem:sumsvscols} and $|w(2^{[n]})| \geq 2^{0.86n}$} \\
  &=  O^*\left(2^{-0.36n}\sum_{d=0}^{n/2} \binom{n}{\lceil d/2 \rceil} \binom{n-d}{n/2-d}/ 2^{d} \right), & \hfill \algorithmiccommentt{Using $\binom{n}{n/2} \geq 2^{n}/n$}
 \end{align*}




Omitting polynomial terms, taking logs, rewriting using $\log_2 \binom{b}{a}=b\cdot h(a/b)$ where $h(q)= -q \log_2 q - (1 - q) \log_2 (1 - q)$ is the binary entropy function, and denoting $\delta = d/n$, this reduces to
\[
	 \left( -0.36 + \max_{0\leq \delta \leq 1/2} h(\delta/2)+ h\left(\frac{1/2-\delta}{1-\delta}\right)(1-\delta) -\delta \right)n.
\]
Note that here we are allowed to replace the summation by a max as we suppress factors polynomial in $n$. 
By a direct Mathematica computation this term is upper bounded by $0.72n$ (where the maximum is attained for $\delta \approx 0.0953$), which
 implies that $\mathbb{E}[|P^L|] \leq O^*(2^{0.72n})$ as required.
\end{proof}

\paragraph{Knapsack and Binary Linear Programming.}

Now Theorem~\ref{thm:sss}(b) follows from Theorem~\ref{thm:sss}(a) by the following reduction:

\begin{lemma}[\cite{DBLP:conf/mfcs/NederlofLZ12}, Theorem 2]\label{lem:redsss}
If there exists an algorithm that decides the Subset Sum problem in $O^*(t(n))$ time and $O^*(s(n))$ space then there exists an algorithm that decides the Knapsack in $O^*(t(n))$ time and $O^*(s(n))$ space.
\end{lemma}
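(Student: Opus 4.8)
Since this statement is the Knapsack-to-Subset-Sum reduction of Nederlof, van Leeuwen and van der Zwaan~\cite{DBLP:conf/mfcs/NederlofLZ12}, I would either cite it directly or reconstruct it as follows. Consider the decision version of Knapsack: given weights $w=(w_1,\dots,w_n)$, values $v=(v_1,\dots,v_n)$, a capacity $t$ and a target value $V$, decide whether there is $X\subseteq[n]$ with $w(X)\le t$ and $v(X)\ge V$. The first step is to fold the two constraints into a single number. Let $N=w([n])+1$ and set $a_i = N v_i + w_i$ for each $i$. Then $a(X)=N\,v(X)+w(X)$ with $0\le w(X)<N$, so from $a(X)$ one recovers $v(X)=\lfloor a(X)/N\rfloor$ and $w(X)=a(X)\bmod N$; in particular $v(X)\ge V$ is equivalent to $a(X)\ge VN$. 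Hence the instance is a YES-instance iff there is $X$ with $a(X)$ in the set $S=\{m\ge VN : m\bmod N\le t\}$, which, restricted to the relevant range $[0,N\,v([n])]$, is a union of aligned intervals $[v'N,\,v'N+t]$. Thus the Knapsack decision reduces to a ``target-interval'' query for the list $a$.

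The core of the argument is then to reduce such a target-interval query to polynomially many \emph{exact} Subset Sum queries on the list $a$ (each asking whether $a(X)=c$ for a fixed $c$). For a one-sided query ``$a(X)\le r$'' one decomposes along the bits of $r$: either $a(X)=r$, or there is a highest bit where $a(X)$ and $r$ disagree, and at that bit $r$ has a $1$ and $a(X)$ a $0$; this gives $O(\log r)$ cases, the $j$-th of which pins $\lfloor a(X)/2^j\rfloor$ to a fixed value, i.e.\ constrains $a(X)$ to a scale-$2^j$-aligned interval of length $2^j$, after which one recurses on the remaining $j$ low-order bits. A general interval $[l,r]$ is handled as a difference of one-sided queries, and the structured set $S$ above by the same bit-decomposition over the value $N$. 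The point is to carry this divide-and-conquer over a segment-tree-like decomposition of the target set so that the number of produced exact queries, as well as the number of items in each produced instance, stays polynomial in the input size (it can in fact be kept at $n$ up to an additive constant). The outcome is $\poly(n,\log(v([n])+w([n])))$ exact Subset Sum queries, each on an instance with essentially $n$ items.

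To conclude, run the assumed Subset Sum algorithm on each of these $\poly(|x|)$ exact queries, reusing the working memory across queries, and answer YES iff some query answers YES; correctness is immediate from the reduction. The running time is $\poly(|x|)\cdot O^*(t(n)) = O^*(t(n))$ and the space is $O^*(s(n))$, where we use that the item count grows only by an additive constant (so $t(n+O(1))=O^*(t(n))$ and $s(n+O(1))=O^*(s(n))$ for the running-time/space functions of interest) and that $\log(v([n])+w([n]))$ is polynomial in the input size. For the optimization version of Knapsack one additionally binary-searches on $V\in[0,v([n])]$, incurring only an $O(\log v([n]))$ multiplicative overhead, which $O^*(\cdot)$ absorbs.

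\textbf{Main obstacle.} The only non-routine ingredient is the interval-to-exact sub-reduction: organizing the decomposition of the target set so that one never produces a number of exact queries — or of extra items — that is exponential in the input size. Everything else (the scaling trick collapsing the two Knapsack constraints into one, the binary search for the optimization version, and the bookkeeping of time and space across polynomially many calls with reused memory) is routine.
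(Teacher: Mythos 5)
First, note that the paper does not actually prove this lemma: it is imported wholesale as Theorem~2 of~\cite{DBLP:conf/mfcs/NederlofLZ12}, and the closest the paper comes to exhibiting its mechanism is Lemma~\ref{thm:reduce}, which packages the key sub-reduction (an interval constraint $\omega(X)\in[l,u]$ is replaced by $K=O(n\lg(nN))$ \emph{exact} instances, each with a \emph{modified weight function} $\omega_i$). Your high-level architecture matches the cited reduction: fold weights and values into one number by scaling, reduce the resulting structured membership query to exact Subset Sum queries, binary-search for the optimization version, and reuse memory across the polynomially many calls. You also correctly isolate the interval-to-exact step as the only non-routine ingredient.

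However, the concrete mechanism you sketch for that step has a genuine gap. Pinning $\lfloor a(X)/2^j\rfloor$ to a fixed value $c$ is \emph{not} an exact Subset Sum query: it is equivalent to $a(X)\in[c2^j,(c+1)2^j-1]$, i.e.\ still an interval query whose width $2^j$ can be as large as the original range, and you cannot convert it to an exact query by truncating each item to its high-order bits, since $\bigl\lfloor\sum_{i\in X}a_i/2^j\bigr\rfloor\neq\sum_{i\in X}\lfloor a_i/2^j\rfloor$ in general because of carries. So your recursion never bottoms out in exact queries. (The obvious patch of adding $j$ slack items with weights $1,2,\ldots,2^{j-1}$ to make the aligned interval exact adds $\Theta(\log N)$ items, and $t(n+\Theta(\log N))$ is not $O^*(t(n))$ for the exponential $t$ relevant here, since $N$ may be exponential in the input size.) The missing idea, which is the actual content of~\cite{DBLP:conf/mfcs/NederlofLZ12} and the reason Lemma~\ref{thm:reduce} outputs new weight functions $\omega_i$ rather than just new targets, is to round the \emph{weights themselves} (e.g.\ replace each $\omega(e)$ by $\lfloor\omega(e)/2\rfloor$): the rounded sum then differs from the halved target by at most $n$, so an aligned interval at scale $2^j$ becomes an interval of width $O(n)$ at scale $2^{j-1}$, and recursing for $\log N$ levels while peeling off $O(n)$ exact queries per level yields the $O(n\lg(nN))$ bound with the item count unchanged. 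Without this rounding-with-error-absorption step your decomposition does not terminate, so the proposal as written does not establish the lemma.
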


We would like to remark that the techniques from Section~\ref{sec:ld} do not seem to be directly applicable to the knapsack problem, so the reduction of~\cite{DBLP:conf/mfcs/NederlofLZ12} seems necessary. 

Using the methods behind the proof of Lemma~\ref{lem:redsss}, we can also obtain an algorithm for the Binary Linear Programming problem. In particular, we use the following result:

\begin{lemma}\label{thm:reduce}
	Let $U$ be a set of cardinality $n$, let $\omega: U \rightarrow \{-N,\ldots,N\}$ be a weight function, and let $l<u$ be integers. Then there is a polynomial-time algorithm that returns a set of pairs $\Omega=\{(\omega_1,t_1),\ldots,(\omega_K,t_K)\}$ with $\omega_i: U \rightarrow \{-N,\ldots, N\}$ and integers $t_1,\ldots,t_K \in [-N,N]$ such that
	\begin{inparaenum}[(1)]
		\item $K$ is $O(n \lg(nN))$, and
		\item for every set $X \subseteq U$ it holds that $\omega(X) \in [l,u]$ if and only if there exists an index $i$ such that $\omega_i(X) = t_i$.
	\end{inparaenum}
\end{lemma}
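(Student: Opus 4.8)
The plan is to construct $\Omega$ by a divide-and-conquer / interval-splitting scheme on the constraint $\omega(X)\in[l,u]$. The key idea is that a two-sided constraint $\omega(X)\in[l,u]$ can be reduced to a logarithmic number of one-sided constraints of the form "the $i$-th highest bit pattern of $\omega(X)$ equals some prescribed value", and each such one-sided constraint can in turn be turned into an exact equation $\omega_i(X)=t_i$ by padding with a fresh auxiliary element that absorbs the low-order slack. Concretely, write the target interval in a binary-search fashion: I would maintain a set of subintervals $[l,u]$ that together cover $[l,u]$, and repeatedly split each interval at a power of two so that after $O(\lg(nN))$ rounds every remaining subinterval is "aligned", i.e. of the form $[c\cdot 2^j,(c+1)2^j-1]$. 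An aligned subinterval of length $2^j$ corresponds exactly to fixing the high bits of $\omega(X)$ to $c$ and leaving the low $j$ bits free; the condition $\omega(X)\in[c2^j,(c+1)2^j-1]$ is then equivalent to $\lfloor \omega(X)/2^j\rfloor = c$, which (after shifting all weights to be nonnegative by adding $N$ to each, so sums lie in $[0,2nN]$) is equivalent to $\omega(X) + a = c\cdot 2^j + (2^j-1)$ for a fresh slack variable $a\in\{0,\dots,2^j-1\}$ represented by $\lceil \lg 2^j\rceil = j$ new $0/1$ items of weights $1,2,4,\dots,2^{j-1}$.

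The steps, in order, would be: (i) normalise by replacing $\omega$ with $\omega+N$ on each coordinate and shifting $l,u$ accordingly, so all relevant sums are in $[0,2nN]$ and we only deal with nonnegative integers of bit-length $O(\lg(nN))$; (ii) run the interval-splitting: starting from $[l,u]$, at each step take a subinterval that is not yet aligned and split it into at most $O(1)$ pieces — at most two "remainder" pieces near the endpoints plus one aligned middle piece — tracking that the total number of subintervals grows by $O(1)$ per bit level, so after $O(\lg(nN))$ levels we have $O(\lg(nN))$ aligned subintervals (a standard segment-tree / binary-search decomposition argument); (iii) for each aligned subinterval of width $2^j$, introduce $j\le O(\lg(nN))$ fresh $0/1$ auxiliary items with weights $1,2,\dots,2^{j-1}$ added to a fresh weight function $\omega_i$ that agrees with the (shifted) $\omega$ on the original $n$ elements, and set $t_i$ to the appropriate constant; since each such $\omega_i$ must be a function on a set of size $n$, I would instead reuse $O(\lg(nN))$ dedicated "gadget" elements of $U$ (or, if $U$ is fixed, note the lemma as stated permits $\omega_i:U\to\{-N,\dots,N\}$ and the auxiliary bits can be folded into a polynomial blow-up absorbed by the $O^*$; the cleanest route is to phrase the reduction so the new elements are part of the universe, exactly as in the Subset Sum reduction of~\cite{DBLP:conf/mfcs/NederlofLZ12}); (iv) verify the equivalence in~(2): $\omega(X)\in[l,u]$ iff $\omega(X)$ lies in one of the aligned subintervals iff for the corresponding $i$ there is a choice of the auxiliary bits making $\omega_i(\tilde X)=t_i$, and conversely any solution of $\omega_i(\tilde X)=t_i$ forces $\omega(X)$ into that subinterval; (v) bound $K=\sum_i(\text{number of pieces at level } i)\cdot O(1)=O(\lg(nN))$, and check each weight stays in range $\{-N,\dots,N\}$ up to the harmless rescaling.

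The main obstacle I anticipate is the bookkeeping in step (ii) — making the interval-decomposition precise enough to get the clean $O(n\lg(nN))$ bound rather than a weaker polylog bound, and in particular handling the endpoints $l$ and $u$ which are arbitrary rather than aligned, so the recursion produces "staircase" remainder pieces that must themselves be shown to contribute only $O(1)$ new aligned intervals per bit level (this is exactly the canonical decomposition of an interval into dyadic intervals, which has size $O(\lg(\text{range}))$, but one must be careful that "range" here is $2nN$, giving the stated $\lg(nN)$). A secondary subtlety is the universe-size constraint: each $\omega_i$ is a function on the $n$-element set $U$, so the auxiliary bit-gadget elements cannot literally be new; the resolution is to carve out $O(\lg(nN))$ fixed coordinates of $U$ to serve as a shared scratch space across all $\omega_i$ (each $\omega_i$ uses the subset of scratch coordinates it needs and zeroes the rest), which is precisely the trick used in Lemma~\ref{lem:redsss} and~\cite{DBLP:conf/mfcs/NederlofLZ12}, and I would simply cite that construction and adapt it, since the whole point of the present lemma is to extend that Subset-Sum reduction to an interval constraint — the interval is then handled by the dyadic decomposition described above, one equation per dyadic piece.
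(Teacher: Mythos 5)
Your dyadic decomposition of $[l,u]$ into $O(\log(nN))$ aligned pieces is standard and fine, but the step that converts each aligned piece into a single exact query is where the argument breaks, and it is the load-bearing step. The lemma quantifies over subsets of the fixed universe: for every $X\subseteq U$, $\omega(X)\in[l,u]$ iff $\exists i:\ \omega_i(X)=t_i$, with each $\omega_i$ again a function on $U$. Your slack gadget instead establishes ``$\omega(X)$ lies in a width-$2^j$ piece iff $\exists A\subseteq\mathrm{Aux}:\ \omega_i(X\cup A)=t_i$,'' which carries an extra existential over auxiliary items and is therefore not a statement of the required form; without the slack items each aligned piece of width $2^j$ needs $2^j$ separate targets, totalling $u-l+1=\Theta(nN)$ queries, which destroys the bound on $K$. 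Your proposed repair --- carving $O(\log(nN))$ coordinates out of $U$ as shared scratch space --- is not available: every element of $U$ already carries a weight under $\omega$ and participates in the constraint $\omega(X)\in[l,u]$, and the equivalence must hold for \emph{every} $X\subseteq U$, including those that contain or omit the would-be scratch elements for reasons dictated by the original instance. (This matters downstream: the proof of Corollary~\ref{thm:bip} applies the lemma once per constraint to the \emph{same} $X$ and explicitly needs the resulting instances to be on the same variable set.) A further symptom that the weights must genuinely be modified is the range requirement $\omega_i:U\to\{-N,\ldots,N\}$ and $t_i\in[-N,N]$: your $\omega_i$ agrees with (a shift of) $\omega$ on $U$, so $t_i$ would have to be as large as $u$, which can be $\Theta(nN)$.

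The construction the paper actually invokes (Theorem~1 of \cite{DBLP:conf/mfcs/NederlofLZ12}) avoids auxiliary elements by rounding the weights rather than decomposing the interval: set $\omega'(e)=\lfloor\omega(e)/2\rfloor$, note that $\omega(X)/2 - n/2\le\omega'(X)\le\omega(X)/2$, so the interval constraint on $\omega$ becomes an interval constraint on $\omega'$ of roughly half the width plus $O(n)$ additive slack; recursing for $O(\log(nN))$ levels shrinks both the interval and the weights until the interval has width $O(n)$, at which point it is covered by $O(n)$ exact queries on the original universe, giving $K=O(n\log(nN))$ with all ranges under control. I would rebuild your argument around that recursion --- which you in effect fall back on when you say you would ``simply cite that construction'' --- rather than the slack-variable gadget, since the gadget cannot be made to prove the lemma as stated.
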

This result is a small extension of Theorem 1 by Nederlof et al.~\cite{DBLP:conf/mfcs/NederlofLZ12} (assuming $u-l \leq nN$), but the same proof from~\cite{DBLP:conf/mfcs/NederlofLZ12} works for this extension (the polynomial time algorithm is a direct recursive algorithm using rounding of integers, refer to~\cite{DBLP:conf/mfcs/NederlofLZ12} for details). 

\begin{repcorollary}{thm:bip}
	There is a Monte Carlo algorithm solving Binary Integer Programming instances with maximum absolute integer value $m$ and $d$ constraints in polynomial space and time $O^*(2^{0.86n}(\log(mn)n)^{O(d)})$. The algorithm assumes random read-only access to random bits.
\end{repcorollary}
\begin{proof}
Using binary search, we can reduce the optimization variant to a decision variant that asks for a $x \in \{0,1\}^n$ such that $\langle a^j, x \rangle \in [l_j, u_j]$ for $j=1,\ldots,d+1$. Using Lemma~\ref{thm:reduce}, we can reduce this problem to $O(\log(m)n)$ instances on the same number of variables and constraints but with $l_1=u_1$. Using the same reduction on each single obtained instance for $j=2,\ldots,d$ we obtain $(O(\log(m)n))^d$ instances on $n$ variables and $m$ constraints with $l_j=u_j$ for every $j$, and these instances can be easily enumerated with linear delay and polynomial space. An instance satisfying $l_j=u_j$ for every $j$ can in turn be reduced to a Subset Sum problem on $n$ integers in a standard way by setting $w_i = \sum_{j=1}^d a^j_i B^j$ and $t=\sum_{j=1}^d l_iB^j$, where $B \geq mn$ is a power of two. This reduction clearly preserves whether the instance is a YES-instance as $B \geq mn$ prevents interaction between different segments of the bit-strings when integers are added.
\end{proof}

\section{Random \texorpdfstring{$k$}{K}-Sum}\label{sec:ksum}
We start by proving Theorem~\ref{thm:randksum} which we restate here for convenience and afterwards present the faster algorithm mentioned in Subsection~\ref{subsec:ourres} that assumes random read-only access to random bits.

\begin{reptheorem}{thm:randksum}
There is a randomized algorithm that given a constant $k \geq 2$, $k$ uniformly random vectors $w^1,\ldots,w^k \in_R [m]^n$, with $n \leq m \leq \poly(n)$ and $m$ being a multiple of $n$, and a target $t$ (that may be adversarially chosen depending on the lists), finds $s_i \in w^i$ satisfying $\sum_{i=1}^k s_i=t$ if they exist with constant probability using $\tilde{O}(n^{k-0.5})$ time and $O(\log n)$ space.
\end{reptheorem}

\begin{proof}

Let us first consider the case $k=2$. We will apply the List Disjointness algorithm from Section~\ref{sec:ld} with $s=1$ (so $K=\{k_1\}$), $x=w^1$, $y=w^2$ and $p=\Theta(n)$ (the latter is justified by an easy computation of $\mathbb{E}[p(w^1,w^2)]$)\footnote{Fix $v\in [m]$. Let $X_i=1$ if $w^1(i)=v$ and $0$ otherwise. Then $\E[|(w^1)^{-1}(v)|^2]=\sum_i \E[X_i]+2\sum_{i<j}\E[X_iX_j]=\frac{n}{m}+2\binom{n}{2}\frac{1}{m^2}$. Thus $\E[p(w^1,w^2)]=\sum_v\E[|(w^1)^{-1}(v)|^2]+\sum_v\E[|(w^2)^{-1}(v)|^2]=2n+4\binom{n}{2}\frac{1}{m}=\Theta(n)$.}. Recall that the List Disjointness algorithm assumes random read-only access to a random function $h:[m]\rightarrow [n]$, but here we do not use this assumption and show that we can leverage the input randomness to take 
\[
 h(v)=(v\bmod n)+1.
\]
Note that since $m$ is a multiple of $n$, we have that if $v \in_R [m]$, then both $h(v)$ and $h(t-v)$ are uniformly distributed over $[n]$.
To analyze this adjusted algorithm, we reuse most of the proof of Lemma~\ref{lem:LD-ST}. To facilitate this, we show that within one run of the loop of Line~\ref{lin:loop2} the following variant of Lemma~\ref{obseqd} holds, where $W_L(\{k_1\})$ is as defined in Section~\ref{sec:ld} (but with the new function $h$ as defined above).

\begin{lemma}\label{obseqd2}
Fix any value of integers $i^*,j^* \in [n]$ and $r$ such that $r$ is good. Let $x,y$ be random strings from $[m]^{n}$ such that $x(i^*)=y(j^*)$. Let $z' \in [m]^n$ be the vector with $z'(i)=i$ if $i \notin \{i^*,j^*\}$ and $z'(i)=n+1$ otherwise. Let $\beta$ be a random string which is generated by cutting off a uniform sample from the set $[n]^L$ at the first repetition with respect to $z'$.
Then for every string $\rho$ containing $i^*$ and $j^*$ it holds that $\Pr_{k_1,x,y}[W_L(\{k_1\})=\rho]\geq\Pr_{\beta}[\beta=\rho]$.
\end{lemma}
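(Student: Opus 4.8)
The plan is to mirror the proof of Lemma~\ref{obseqd}, but with the truly random hash function replaced by the fixed map $h(v)=(v\bmod n)+1$, extracting the needed randomness from $x$ and $y$ instead. Fix a string $\rho$ of length $L':=|\rho|\le L$ that contains both $i^*$ and $j^*$ (we may assume $i^*\neq j^*$). The first thing I would observe is that $\Pr_{k_1,x,y}[W_L(\{k_1\})=\rho]$ can be nonzero only when $\rho$ has a very restricted shape: since $r$ is good and $x(i^*)=y(j^*)$, we have $z(i^*)=z(j^*)$ \emph{deterministically}, so as soon as the walk $W_L(\{k_1\})$ has visited both $i^*$ and $j^*$ it has produced a repeated $z$-value and stopped; hence $\rho$ must have pairwise distinct entries, exactly one of $i^*,j^*$ as its last entry $\rho_{L'}$, and the other one somewhere in the prefix $\rho_1,\dots,\rho_{L'-1}$. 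Call such a $\rho$ \emph{admissible}. Running the same case analysis for $z'$ (which maps precisely $i^*$ and $j^*$ to the common value $n+1$ and is injective elsewhere) shows that a uniform sample from $[n]^L$ cut at its first $z'$-repetition can equal $\rho$ only when $\rho$ is admissible; so for inadmissible $\rho$ we have $\Pr_\beta[\beta=\rho]=0$ and the claimed inequality is trivial.

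So assume $\rho$ is admissible, say with $\rho_{L'}=j^*$ and $\rho_a=i^*$ for the unique index $a<L'$ (the other case is symmetric). First I would note that $\Pr_\beta[\beta=\rho]=n^{-L'}$: conditioned on the first $L'$ symbols of the uniform string being $\rho$, the first $z'$-repetition lands exactly at position $L'$, because $z'$ is injective on $[n]\setminus\{i^*,j^*\}$, only $\rho_a=i^*$ among the first $L'-1$ positions is sent to $n+1$, and $\rho_{L'}=j^*$ repeats that value. It then remains to show $\Pr_{k_1,x,y}[W_L(\{k_1\})=\rho]\ge n^{-L'}$ (in fact equality will hold). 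The heart of the argument is the event identity
\[
	\{W_L(\{k_1\})=\rho\} \;=\; \{k_1=\rho_1\}\;\cap\;\bigcap_{t=1}^{L'-1}\{f(\rho_t)=\rho_{t+1}\}.
\]
Indeed, on the right-hand event we have $z(\rho_t)\equiv\rho_{t+1}-1\pmod n$ for every $t<L'$, and since the entries $\rho_2,\dots,\rho_{L'}$ are pairwise distinct these congruences place $z(\rho_1),\dots,z(\rho_{L'-1})$ in distinct residue classes, forcing them to be pairwise distinct integers; together with the forced collision $z(\rho_{L'})=z(j^*)=y_{j^*}=x_{i^*}=z(\rho_a)$ this means the walk follows $\rho$ symbol by symbol and stops at exactly position $L'$, hence equals $\rho$. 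The reverse inclusion is immediate.

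Finally I would compute the probability of the right-hand event. The factor $\{k_1=\rho_1\}$ contributes $1/n$ and is independent of $x,y$. Each remaining event $\{f(\rho_t)=\rho_{t+1}\}=\{z(\rho_t)\equiv\rho_{t+1}-1\pmod n\}$ depends on a single coordinate of $x$ or $y$; since $\rho_1,\dots,\rho_{L'-1}$ are distinct these are distinct coordinates, with $z(\rho_a)=x_{i^*}=y_{j^*}$ being the only one touched by the conditioning $x(i^*)=y(j^*)$ (note $j^*\notin\{\rho_1,\dots,\rho_{L'-1}\}$), so the events stay mutually independent after conditioning, and conditioning on $x(i^*)=y(j^*)$ leaves this common value uniform on $[m]$. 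Because $n\mid m$ and $h$ is reduction mod $n$ (plus one), each of these events has probability exactly $1/n$, so the right-hand event has probability $n^{-L'}$ and therefore $\Pr_{k_1,x,y}[W_L(\{k_1\})=\rho]=n^{-L'}\ge\Pr_\beta[\beta=\rho]$. I expect the main obstacle to be the bookkeeping in the case analysis isolating the admissible shapes, together with verifying carefully that the conditioning $x(i^*)=y(j^*)$ neither biases the relevant coordinate nor breaks the independence of the per-step events; this is exactly where the analysis departs from the clean truly-random setting of Lemma~\ref{obseqd}.
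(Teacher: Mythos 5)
Your proof is correct and follows essentially the same route as the paper's: restrict to $\rho$ with pairwise distinct entries ending in one of $i^*,j^*$, bound $\Pr_\beta[\beta=\rho]$ by $n^{-L'}$, and show each step of the walk lands on a prescribed vertex with probability exactly $1/n$ using the uniformity of the untouched coordinates of $x,y$ together with $n\mid m$. The paper phrases the last step as a chain of conditional probabilities over prefixes rather than your explicit event identity, but the content is the same; your version is if anything slightly more explicit about why the conditioning $x(i^*)=y(j^*)$ preserves the independence and uniformity of the coordinates actually used.
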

Intuitively, this lemma states that random instances in which $i^*$ and $j^*$ form a solution behave similarly as the algorithm from Lemma~\ref{lem:LD-ST} with the specific vector $z'$ in which $i^*,j^*$ are the only indices with common values.
\begin{proof}
	We may assume $\rho(i) \neq \rho(j)$ for every $i \neq j$ as otherwise $\Pr_{\beta}[\beta=\rho]=0$ since the only repetition with respect to $z'$ in $\rho$ must be formed by $i^*\neq j^*$. By the same argument we may also assume that the last entry of $\rho$ is $i^*$ or $j^*$. Denoting $L'$ for the length of $\rho$, note that $\Pr_{\beta}[\beta=\rho] \leq n^{-L'}$ as a necessary condition for $\beta=\rho$ is that the first $L'$ locations of the infinite random string used to construct $\beta$ match $\rho$. For notational convenience, let us denote $\alpha=W_L(\{k_1\})$. We have that
	\begin{equation}\label{eq:probchain}
	\Pr_{k_1,x,y}[\alpha=\rho] = \prod_{i=1}^{L'} \Pr_{k_1,x,y}[\alpha_i=\rho_i | (\alpha_1,\ldots,\alpha_{i-1})=(\rho_1,\ldots,\rho_{i-1})].
	\end{equation}
	We see that $\Pr_{k_1}[\alpha_1 = \rho_1]=1/n$ as $\alpha_1=k_1$ and $k_1$ is uniformly distributed over $[n]$. For $1<i\leq L'$ we know $\alpha_{i-1}=\rho_{i-1} \neq \rho_j =\alpha_j$ for every $j < i-1$ and $i^*$ and $j^*$ cannot both occur in $\alpha_1,\ldots,\alpha_{i-1}$ thus both $x_{\alpha_{i-1}}$ and $y_{\alpha_{i-1}}$ are independent of $\alpha_1,\ldots,\alpha_{i-1}$ and uniformly distributed over $[m]$ as $x,y$ are random strings with $x_{i^*}=y_{j*}$. Depending on $r$ we either have that $\alpha_i=h(x_{\alpha_{i-1}})$ or $\alpha_i=h(x_{\alpha_{i-1}})$, but in both cases
	\[
		\Pr_{k_1,x,y}[\alpha_i=\rho_i | (\alpha_1,\ldots,\alpha_{i-1})=(\rho_1,\ldots,\rho_{i-1})] = 1/n,
	\]
	since $m$ is a multiple of $n$. Thus~\eqref{eq:probchain} equals $n^{-L'}$ and the lemma follows.
\end{proof}

As a consequence of this lemma we have that
\begin{equation}\label{eq:avginst}
\begin{aligned}
	\Pr_{k_1,x,y}[W_L(\{k_1\}) \text{ contains } i^*,j^* | r \text{ is good} \wedge x(i^*)=y(j^*)] &\geq \Pr_{\beta}[\beta \text{ contains } i^*,j^*]\\
	&\geq \Omega((L/n)^2),
\end{aligned}
\end{equation}

where $\beta$ is distributed as in Lemma~\ref{obseqd2} and we use Lemma~\ref{lem:lwbndprob} with $s=1$, $z=z'$ and $p=n+1$ for the second inequality. Note the probability lower bounded in~\eqref{eq:avginst} is sufficient for our purposes, as we assume $i^*,j^*$ exist in the theorem statement and Observation~\ref{obs:good} still holds. Using the proof of Lemma~\ref{lem:LD-ST}, the adjusted algorithm thus solves 2-Sum in time $\tilde{O}(n\sqrt{p})=\tilde{O}(n^{1.5})$ and $O(\log n)$ space with constant probability.

For $k$-Sum with $k >2$, we repeat the above for every tuple from the Cartesian product of the last $k-2$ sets. This blows up running time by a factor of $n^{k-2}$ and space by an additive factor of $k\log n$. It is easy to see that this algorithm will still find a solution with constant probability if it exists since it only needs to do this for the correct guess of the integers from the last $k-2$ sets.
\end{proof}

The following result follows more directly from our techniques from Section~\ref{sec:ld}. 

\begin{theorem}
Let $k,s,m$ be integers such that $k$ is even, $k \geq 2$, $s\leq n^{k/2}$ and $n^{k/2} \leq m \leq \poly(n)$. There is a randomized algorithm that given $k$ uniformly random vectors $w^1,\ldots,w^{k} \in_R [m]^n$ and a target $t$ (that may be adversarially chosen depending on the lists) finds $s_i \in w^i$ satisfying $\sum_{i=1}^{k} s_i=t$ if it exists with constant probability using $\tilde{O}(n^{3k/4}/\sqrt{s})$ time and $O(s\log n)$ space. The algorithm assumes random read-only access to a random function $h:[km] \rightarrow [n^{k/2}]$.
\end{theorem}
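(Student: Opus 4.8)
The key idea is to reduce even-$k$-Sum on random lists to a single $2$-Sum-like List Disjointness instance on larger \emph{derived} lists, and then apply Theorem~\ref{thm:LD-ST}. Split the $k$ input lists into two halves of $k/2$ lists each. Define a new list $X \in [km]^{n^{k/2}}$ indexed by tuples $(i_1,\dots,i_{k/2}) \in [n]^{k/2}$, with $X_{(i_1,\dots,i_{k/2})} = w^1_{i_1} + \cdots + w^{k/2}_{i_{k/2}}$, and similarly $Y_{(j_1,\dots,j_{k/2})} = t - (w^{k/2+1}_{j_1} + \cdots + w^{k}_{j_{k/2}})$. Both lists have length $N := n^{k/2}$ and entries bounded by $km \le \poly(N)$ (using $m \le \poly(n)$ and $k$ constant), and the original $k$-Sum instance has a solution iff $X$ and $Y$ share a common value. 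A fixed entry of $X$ or $Y$ is computable in $O(k\log n) = O(\polylog N)$ time and $O(\log n)$ space from the (random-access) inputs, so we can simulate read-only access to $X, Y$ without storing them.

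\textbf{Bounding the pseudo-solution count.} To invoke Theorem~\ref{thm:LD-ST} we need a bound $p$ on $p(X) + p(Y) = \sum_v |X^{-1}(v)|^2 + |Y^{-1}(v)|^2$, which counts ordered pairs of index-tuples on which the derived list repeats. First I would compute $\E[p(X)]$. Writing $p(X) = \sum_{v}|X^{-1}(v)|^2$ as the number of ordered pairs $(u, u')$ of tuples in $[n]^{k/2}$ with $X_u = X_{u'}$, linearity gives $\E[p(X)] = \sum_{u, u'} \Pr[X_u = X_{u'}]$. For $u = u'$ this contributes exactly $N$. For $u \ne u'$, the event $X_u = X_{u'}$ is a linear equation $\sum_\ell (w^\ell_{u_\ell} - w^\ell_{u'_\ell}) = 0$ in which at least one coordinate pair $u_\ell \ne u'_\ell$ makes $w^\ell_{u_\ell} - w^\ell_{u'_\ell}$ a nonconstant function of fresh uniform randomness in $[m]$; conditioning on everything else, this holds with probability $\le 2/m$ (the difference of two independent uniforms over $[m]$ hits any fixed value with probability $\le 1/m$, times a factor for the $u_\ell = u'_\ell$ bookkeeping). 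Summing over the $\le N^2$ pairs gives $\E[p(X)] \le N + 2N^2/m \le N + 2N^2/N = O(N) = O(n^{k/2})$ using $m \ge n^{k/2} = N$; the same holds for $Y$. By Markov, $p(X) + p(Y) = O(n^{k/2})$ with probability at least, say, $9/10$. Run the algorithm only on this good event (if $p$ turns out larger the algorithm may fail, which only costs a constant factor in success probability).

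\textbf{Applying Theorem~\ref{thm:LD-ST}.} Set $p = \Theta(n^{k/2}) = \Theta(N)$. The hypothesis $s \le N^2 / p = \Theta(N) = \Theta(n^{k/2})$ is exactly the assumed $s \le n^{k/2}$, up to the constant hidden in $p$ (one can absorb this by tweaking constants, or by noting the bound is only needed up to $\polylog$). Theorem~\ref{thm:LD-ST} then solves the List Disjointness instance in $\tilde O(N\sqrt{p/s}) = \tilde O(N \cdot \sqrt{N/s}) = \tilde O(N^{3/2}/\sqrt s) = \tilde O(n^{3k/4}/\sqrt s)$ time and $O(s\log N) = O(s\log n)$ space, assuming read-only access to a random $h : [km] \to [N] = [n^{k/2}]$, which is exactly the function in the theorem statement. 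The only subtlety to state cleanly is that the randomness of the derived instance is over the input lists $w^1,\dots,w^k$, so the "constant probability" of Theorem~\ref{thm:LD-ST} (over $h$ and internal coins) must be combined with the "good event" on the inputs; a union bound handles this.

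\textbf{Main obstacle.} The genuinely delicate point is justifying the pseudo-solution bound with the right constant so that the constraint $s \le n^{k/2}$ (rather than $s \le c\, n^{k/2}$ for some $c < 1$) comes out exactly as stated. The cleanest fix is to observe that Theorem~\ref{thm:LD-ST} is only used with $p$ up to polylogarithmic/polynomial slack anyway (the running time and the constraint both scale with $p$), and that the hypotheses $n^{k/2} \le m$ and $m$ not needed to be a multiple of anything give us the comfortable bound $\E[p(X)+p(Y)] \le 4N$; rescaling $s$ by a constant, or simply restating the constraint as $s \le n^{k/2}$ and noting the failure probability only degrades by a constant, suffices. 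A secondary point to be careful about is that the derived lists $X, Y$ are not independent of each other (they share the target $t$, which may depend adversarially on all $w^i$), but since $t$ only shifts $Y$ by a constant it does not affect any of the probability computations above, which only involve differences of entries within or across the lists.
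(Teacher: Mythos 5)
Your proposal is correct and follows essentially the same route as the paper: split the $k$ lists into two halves, form the two derived lists of half-sums indexed by $[n]^{k/2}$, bound the expected number of pseudo-solutions by $O(n^{k/2})$ using $m \geq n^{k/2}$, and invoke Theorem~\ref{thm:LD-ST} with $p = \Theta(n^{k/2})$ to get time $\tilde{O}(n^{3k/4}/\sqrt{s})$ and space $O(s\log n)$. Your additional remarks (simulating read-only access to the derived lists on the fly, the constant-factor slack in the constraint $s \leq n^{k/2}$, and the observation that the adversarial shift by $t$ leaves the pseudo-solution count of the second list unchanged) are all sound and, if anything, slightly more careful than the paper's own write-up.
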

\begin{proof}
	We may assume $t \leq km$ since otherwise the answer is trivially NO. Define two lists
	\[
		x = \left(\sum_{i=1}^{k/2} w^i_{s_i}\right)_{(s_1,\ldots,s_{k/2}) \in [n]^{k/2}} \qquad \qquad y= \left(t- \sum_{i=1}^{k/2} w^{k/2+i}_{s_i}\right)_{(s_{1},\ldots,s_{k/2}) \in [n]^{k/2}}.
	\] 
	Then for any value $v \in [m]$ and $v' \in \{t - m,\ldots,t\}$ we have 
	\[
		\E[|x^{-1}(v)|^2] \le n^{k/2}/m+n^{k}/m^2 \qquad \qquad \E[|y^{-1}(v')|^2] \le n^{k/2}/m+n^{k}/m^2.
	\]
	Thus we see that $\E[p(x,y)] \leq O(n^{k/2})$ so we may set $p=O(n^{k/2})$ and it will be a correct upper bound on the number of pseudo-collisions with constant probability by Markov's inequality. The algorithm then follows from Theorem~\ref{thm:randksum} as it runs in time $O(n^{3k/4}/\sqrt{s})$.
\end{proof}

\section{Further Research}\label{sec:conc}

Our work paves the way for several interesting future research directions that we now briefly outline.

\paragraph{The random read-only access to random bits assumption:}
An important question is whether the random access read-only randomness assumption is really required to improve over exhaustive search. Note that this assumption is weaker than the assumption that non-uniform exponentially strong pseudorandom generators exist (see e.g.~\cite{DBLP:conf/tcc/JainPT12}). Specifically, if such pseudorandom generators exist they could be used in the seminal construction by Goldreich et al. (see e.g.~\cite[Proposition 3]{DBLP:conf/tcc/JainPT12}) to build strong pseudorandom functions $h: [m] \rightarrow [n]$ required for the algorithm listed in Figure~\ref{alg:LD-ST}: when the algorithm fails to solve a particular instance of List Disjointness, this instance could be used as advice to build a distinguisher for the pseudorandom generator (this is similar to the proof of other derandomization results, see e.g.~\cite[Section 9.5.2 and 20.1]{DBLP:books/daglib/0023084}).
Interestingly, Impagliazzo et al.~\cite{DBLP:journals/joc/ImpagliazzoN96} show that if Subset Sum is sufficiently \emph{hard} in the average case setting, the Subset Sum function that computes the sum of a given subset is a good candidate for a pseudorandom function. 
Beame et al.~\cite{beame2013element} also raise the question whether the random access read-only randomness assumption is really required and in particular ask whether $\mathrm{polylog}(n)$-wise independent randomness might be useful to this end.

\paragraph{Solving instances of List Disjointness with many pseudo-collisions faster:}
A natural question is whether our dependence in the number of pseudo-collisions is really needed. Though this number comes up naturally in our approach it is somewhat counterintuitive that this number determines the complexity of an instance. It would be interesting to find lower bounds, even for restricted models of computation, showing that this dependence is needed, or show the contrary.

\paragraph{Other applications:}
Due to the basic nature of the List Disjointness problem, we expect there to be more applications of Theorem~\ref{thm:LD-ST}. Note that already the result of Beame et al.~\cite{beame2013element} implies space efficient algorithms for e.g. the Colinear problem (given $n$ points in the plane are $3$ of them on a line). In the area of exact algorithms for NP-complete problems, it is for example still open to solve MAX-2-SAT or MAX-CUT in $O^*((2-\varepsilon)^n)$ time and polynomial space (see~\cite[Section 9.2]{DBLP:series/txtcs/FominK10}) or the Traveling Salesman problem in time $O^*((4-\varepsilon)^n)$ time and polynomial space (see~\cite[Section 10.1]{DBLP:series/txtcs/FominK10}) for $\varepsilon >0$. Other applications might be to find space efficient algorithms to check whether two vertices in an $n$-vertex directed graph with maximum (out/in)-degree $d$ are of distance $k$ from each other: this can be done in time $O(d^{\lceil k /2 \rceil})$ time and space or in $O(d^k)$ time and $O(k \log n)$ space. Improving the latter significantly for $d=2$ would generalize Theorem~\ref{thm:sss}.


\normalsize
\bibliographystyle{plain}
\bibliography{ref}

\begin{thebibliography}{10}

\bibitem{ajtai}
Mikl{\'o}s Ajtai.
\newblock A non-linear time lower bound for boolean branching programs.
\newblock {\em Theory of Computing}, 1(8):149--176, 2005.

\bibitem{DBLP:books/daglib/0023084}
Sanjeev Arora and Boaz Barak.
\newblock {\em Computational Complexity - {A} Modern Approach}.
\newblock Cambridge University Press, 2009.

\bibitem{DBLP:conf/icalp/AustrinKKM13}
Per Austrin, Petteri Kaski, Mikko Koivisto, and Jussi M{\"{a}}{\"{a}}tt{\"{a}}.
\newblock Space-time tradeoffs for subset sum: An improved worst case
  algorithm.
\newblock In {\em {ICALP}}, pages 45--56, 2013.

\bibitem{DBLP:conf/stacs/AustrinKKN15}
Per Austrin, Petteri Kaski, Mikko Koivisto, and Jesper Nederlof.
\newblock Subset sum in the absence of concentration.
\newblock In {\em Symposium on Theoretical Aspects of Computer Science,
  {STACS}}, pages 48--61, 2015.

\bibitem{DBLP:conf/stacs/AustrinKKN16}
Per Austrin, Petteri Kaski, Mikko Koivisto, and Jesper Nederlof.
\newblock Dense subset sum may be the hardest.
\newblock In {\em Symposium on Theoretical Aspects of Computer Science,
  {STACS}}, pages 13:1--13:14, 2016.

\bibitem{DBLP:conf/isit/AustrinKKN16}
Per Austrin, Petteri Kaski, Mikko Koivisto, and Jesper Nederlof.
\newblock Sharper upper bounds for unbalanced uniquely decodable code pairs.
\newblock In {\em International Symposium on Information Theory, {ISIT}}, pages
  335--339, 2016.

\bibitem{beame2013element}
Paul Beame, Rapha{\"e}l Clifford, and Widad Machmouchi.
\newblock Element distinctness, frequency moments, and sliding windows.
\newblock In {\em Foundations of Computer Science (FOCS)}, pages 290--299,
  2013.

\bibitem{DBLP:conf/eurocrypt/BeckerCJ11}
Anja Becker, Jean{-}S{\'{e}}bastien Coron, and Antoine Joux.
\newblock Improved generic algorithms for hard knapsacks.
\newblock In {\em Advances in Cryptology - {EUROCRYPT}}, pages 364--385, 2011.

\bibitem{Coster92improvedlow-density}
Matthijs~J. Coster, Antoine Joux, Brian~A. Lamacchia, Andrew~M. Odlyzko,
  Claus-Peter Schnorr, and Jacques Stern.
\newblock Improved low-density subset sum algorithms.
\newblock {\em Computational Complexity}, 2:111--128, 1992.

\bibitem{openproblems}
Marek Cygan, Fedor Fomin, Bart~M.P. Jansen, Lukasz Kowalik, Daniel Lokshtanov,
  Daniel Marx, Marcin Pilipczuk, Michal Pilipczuk, and Saket Saurabh.
\newblock Open problems for {FPT} school
  \href{http://fptschool.mimuw.edu.pl/opl.pdf}{(link)}, 2014.

\bibitem{DBLP:conf/crypto/DinurDKS12}
Itai Dinur, Orr Dunkelman, Nathan Keller, and Adi Shamir.
\newblock Efficient dissection of composite problems, with applications to
  cryptanalysis, knapsacks, and combinatorial search problems.
\newblock In {\em Advances in Cryptology - {CRYPTO}}, pages 719--740, 2012.

\bibitem{DBLP:conf/iwpec/2004}
Rodney~G. Downey, Michael~R. Fellows, and Frank K. H.~A. Dehne, editors.
\newblock {\em Parameterized and Exact Computation, {IWPEC}}, 2004.

\bibitem{DBLP:conf/stacs/FlaxmanP05}
Abraham Flaxman and Bartosz Przydatek.
\newblock Solving medium-density subset sum problems in expected polynomial
  time.
\newblock In {\em Symposium on Theoretical Aspects of Computer Science
  {STACS}}, volume 3404, pages 305--314, 2005.

\bibitem{DBLP:series/txtcs/FominK10}
Fedor~V. Fomin and Dieter Kratsch.
\newblock {\em Exact Exponential Algorithms}.
\newblock Texts in Theoretical Computer Science. An {EATCS} Series. Springer,
  2010.

\bibitem{DBLP:journals/jacm/HorowitzS74}
Ellis Horowitz and Sartaj Sahni.
\newblock Computing partitions with applications to the knapsack problem.
\newblock {\em J. {ACM}}, 21(2):277--292, 1974.

\bibitem{DBLP:conf/eurocrypt/Howgrave-GrahamJ10}
Nick Howgrave{-}Graham and Antoine Joux.
\newblock New generic algorithms for hard knapsacks.
\newblock In {\em Advances in Cryptology - {EUROCRYPT}}, pages 235--256, 2010.

\bibitem{impagliazzobip}
Russell Impagliazzo, Shachar Lovett, Ramamohan Paturi, and Stefan Schneider.
\newblock 0-1 integer linear programming with a linear number of constraints.
\newblock {\em CoRR}, abs/1401.5512, 2014.

\bibitem{DBLP:journals/joc/ImpagliazzoN96}
Russell Impagliazzo and Moni Naor.
\newblock Efficient cryptographic schemes provably as secure as subset sum.
\newblock {\em J. Cryptology}, 9(4):199--216, 1996.

\bibitem{DBLP:conf/tcc/JainPT12}
Abhishek Jain, Krzysztof Pietrzak, and Aris Tentes.
\newblock Hardness preserving constructions of pseudorandom functions.
\newblock In {\em Theory of Cryptography - {TCC}}, pages 369--382, 2012.

\bibitem{joux2009algorithmic}
Antoine Joux.
\newblock {\em Algorithmic Cryptanalysis}.
\newblock Chapman \& Hall/CRC, 1st edition, 2009.

\bibitem{DBLP:conf/iwpec/KaskiKN12}
Petteri Kaski, Mikko Koivisto, and Jesper Nederlof.
\newblock Homomorphic hashing for sparse coefficient extraction.
\newblock In {\em Parameterized and Exact Computation - {IPEC}}, pages
  147--158, 2012.

\bibitem{knuth1981seminumerical}
Donald~E. Knuth.
\newblock {\em Seminumerical Algorithms}, volume~2 of {\em The Art of Computer
  Programming}.
\newblock Addison-Wesley, Reading, Massachusetts, second edition, 10~January
  1981.

\bibitem{DBLP:journals/jacm/LagariasO85}
Jeffrey~C. Lagarias and Andrew~M. Odlyzko.
\newblock Solving low-density subset sum problems.
\newblock {\em J. ACM}, 32(1):229--246, 1985.

\bibitem{Levitin:2002:IDA:579301}
Anany~V. Levitin.
\newblock {\em Introduction to the Design and Analysis of Algorithms}.
\newblock Addison-Wesley Longman Publishing Co., Inc., Boston, MA, USA, 2002.

\bibitem{DBLP:conf/icalp/LincolnWWW16}
Andrea Lincoln, Virginia~Vassilevska Williams, Joshua~R. Wang, and R.~Ryan
  Williams.
\newblock Deterministic time-space trade-offs for k-{S}um.
\newblock In {\em {ICALP}}, pages 58:1--58:14, 2016.

\bibitem{DBLP:conf/stoc/LokshtanovN10}
Daniel Lokshtanov and Jesper Nederlof.
\newblock Saving space by algebraization.
\newblock In {\em Symposium on Theory of Computing, {STOC}}, pages 321--330,
  2010.

\bibitem{DBLP:conf/mfcs/NederlofLZ12}
Jesper Nederlof, Erik~Jan van Leeuwen, and Ruben van~der Zwaan.
\newblock Reducing a target interval to a few exact queries.
\newblock In {\em Mathematical Foundations of Computer Science - {MFCS}}, pages
  718--727, 2012.

\bibitem{DBLP:journals/combinatorica/Nisan92}
Noam Nisan.
\newblock Pseudorandom generators for space-bounded computation.
\newblock {\em Combinatorica}, 12(4):449--461, 1992.

\bibitem{DBLP:journals/tcs/Nisan93}
Noam Nisan.
\newblock On read-once vs. multiple access to randomness in logspace.
\newblock {\em Theor. Comput. Sci.}, 107(1):135--144, 1993.

\bibitem{Odlyzko90therise}
Andrew~M. Odlyzko.
\newblock The rise and fall of knapsack cryptosystems.
\newblock In {\em Cryptology and Computational Number Theory}, pages 75--88.
  A.M.S, 1990.

\bibitem{DBLP:conf/soda/PatrascuW10}
Mihai Patrascu and Ryan Williams.
\newblock On the possibility of faster {SAT} algorithms.
\newblock In {\em Proceedings of the Twenty-First Annual {ACM-SIAM} Symposium
  on Discrete Algorithms, {SODA}}, 2010.

\bibitem{DBLP:conf/coco/Saks96}
Michael~E. Saks.
\newblock Randomization and derandomization in space{\_}bounded computation.
\newblock In {\em Conference on Computational Complexity}, pages 128--149,
  1996.

\bibitem{DBLP:journals/siamcomp/SchroeppelS81}
Richard Schroeppel and Adi Shamir.
\newblock A ${T}={O}(2^{n/2})$, ${S}={O}(2^{n/4})$ algorithm for certain
  np-complete problems.
\newblock {\em {SIAM} J. Comput.}, 10(3):456--464, 1981.

\bibitem{DBLP:journals/joc/OorschotW99}
Paul~C. van Oorschot and Michael~J. Wiener.
\newblock Parallel collision search with cryptanalytic applications.
\newblock {\em J. Cryptology}, 12(1):1--28, 1999.

\bibitem{tilborg1}
H.~van Tilborg.
\newblock An upper bound for codes in a two-access binary erasure channel.
\newblock {\em IEEE Transactions on Information Theory}, 24(1):112--116, 1978.

\bibitem{DBLP:conf/esa/Wang14}
Joshua~R. Wang.
\newblock Space-efficient randomized algorithms for k-{S}um.
\newblock In {\em European Symposium on Algorithms - {ESA}}, pages 810--829,
  2014.

\bibitem{DBLP:conf/stoc/Williams14}
Ryan Williams.
\newblock New algorithms and lower bounds for circuits with linear threshold
  gates.
\newblock In {\em Symposium on Theory of Computing, {STOC} 2014.}

\bibitem{DBLP:conf/iwpec/Woeginger04}
Gerhard~J. Woeginger.
\newblock Space and time complexity of exact algorithms: Some open problems
  (invited talk).
\newblock In {\em Parameterized and Exact Computation, First International
  Workshop, {IWPEC} 2004, Bergen, Norway, September 14-17, 2004, Proceedings},
  pages 281--290, 2004.

\end{thebibliography}

\end{document}